\newtheorem{theorem}{Theorem}[section]
\newtheorem{lemma}[theorem]{Lemma}
\newtheorem{proposition}[theorem]{Proposition}
\newtheorem{corollary}[theorem]{Corollary}
\newtheorem{example}{Example}[section]
\title{Elasticity $\mathscr{M}$-tensors and \\ the Strong Ellipticity Condition}
\author{
Weiyang Ding\thanks{Department of Mathematics,  Hong Kong Baptist University, Kowloon Tong, Hong Kong. Email: {\tt wyding@hkbu.edu.hk}. This author's work was partially supported by the National Natural Science Foundation of China(11801479).}
\and
Jinjie Liu\thanks{Department of Applied Mathematics, The Hong Kong Polytechnic University, Kowloon, Hong Kong. Email: {\tt jinjie.liu@connect.polyu.hk}.}
\and
Liqun Qi\thanks{Department of Applied Mathematics, The Hong Kong Polytechnic University, Kowloon, Hong Kong. Email: {\tt liqun.qi@polyu.edu.hk}. This author's work was partially supported by the Hong Kong Research Grants Council (Grant No. PolyU 15302114, 15300715, 15301716, 15300717 and C1007-15G).}
\and
Hong Yan\thanks{Department of Electronic Engineering, City University of Hong Kong, Kowloon, Hong Kong. Email: {\tt h.yan@cityu.edu.hk}. This author's work was partially supported by the Hong Kong Research Grants Council (Grant No. C1007-15G).}
}
\date{\today}
\begin{document}

\maketitle

\begin{abstract}
In this paper, we establish two sufficient conditions for the strong ellipticity of any fourth-order elasticity tensor and investigate a class of tensors satisfying the strong ellipticity condition, the elasticity $\mathscr{M}$-tensor. The first sufficient condition is that the strong ellipticity holds if the unfolding matrix of this fourth-order elasticity tensor can be modified into a positive definite one by preserving the summations of some corresponding entries. Second, an alternating projection algorithm is proposed to verify whether an elasticity tensor satisfies the first condition or not. Besides, the elasticity $\mathscr{M}$-tensor is defined with respect to the M-eigenvalues of elasticity tensors. We prove that any nonsingular elasticity $\mathscr{M}$-tensor satisfies the strong ellipticity condition by employing a Perron-Frobenius-type theorem for M-spectral radii of nonnegative elasticity tensors. Other equivalent definitions of nonsingular elasticity $\mathscr{M}$-tensors are also established.

  \vskip 12pt
  \noindent {\bf Key words. } {Elasticity tensor, strong ellipticity, M-positive definite, S-positive definite, alternating projection, $\mathscr{M}$-tensor, nonnegative tensor.}

  \vskip 12pt
  \noindent {\bf AMS subject classifications. }{74B20, 74B10, 15A18, 15A69, 15A99.}

\end{abstract}

\bigskip

\section{Introduction}

The strong ellipticity condition is essential in theory of elasticity, which guarantees the existence of solutions of basic boundary-value problems of elastostatics and thus ensures an elastic material to satisfy some mechanical properties.
Thus to identify whether the strong ellipticity holds or not for a given material is an important problem in mechanics \cite{Gurtin73}.
Knowles and Sternberg \cite{KnowlesSternberg75,KnowlesSternberg76} proposed necessary and sufficient conditions for strong ellipticity of the equations governing finite plane equilibrium deformations of a compressible hyperelastic solid.
Their works were further extended by Simpson and Spector \cite{SimpsonSpector} to the special case using the representation theorem for copositive matrices.
Rosakis \cite{Rosakis90} and Wang and Aron \cite{WangAron96} also established some reformulations.
Furthermore, Walton and Wilber \cite{WaltonWilber} provided sufficient conditions for strong ellipticity of a general class of anisotropic hyperelastic materials, which require the first partial derivatives of the reduced-stored energy function to satisfy several simple inequalities and the second partial derivatives to satisfy a convexity condition.
Chiri\c t\u a, Danescu, and Ciarletta\cite{ChiritaDanescuCiarletta07} and Zubov and Rudev \cite{ZubovRudev16} gave sufficient and necessary conditions for the strong ellipticity of certain classes of anisotropic linearly elastic materials.
Gourgiotis and Bigoni \cite{GourgiotisBigoni16} investigated the strong ellipticity of materials with extreme mechanical anisotropy.

Qi, Dai, and Han \cite{QiDaiHan09}  proved a necessary and sufficient condition of the strong ellipticity by introducing M-eigenvalues for ellipticity tensors and showing that the strong ellipticity holds if and only if all the M-eigenvalues of the ellipticity tensor is positive.
A practical power method for computing the largest M-eigenvalue of any ellipticity tensor was proposed by Wang, Qi, and Zhang \cite{WangQiZhang09} and may also be applied to the verification of the strong ellipticity.
Very recently, Huang and Qi \cite{HuangQi17} generalized the M-eigenvalues of fourth-order ellipticity tensors and related algorithms to higher order cases.
Another type of ``eigenvalues'' for ellipticity tensors called singular values was defined by Chang, Qi, and Zhou \cite{ChangQiZhou10}, and the positivity of all the singular values of the ellipticity tensor is also a necessary and sufficient condition for the strong ellipticity.
Han, Dai, and Qi \cite{HanDaiQi09} linked the strong ellipticity condition to the rank-one positive definiteness of three second-order tensors, three fourth-order tensors, and a sixth-order tensor.

Symmetric ${\bf M}$-matrices, also called the Stieltjes matrices, are an important class of positive semidefinite matrices used in many disciplines in science and engineering,  such as linear systems of equations, numerical solutions of partial differential equations, the Markov chains, the queueing theory, and the graph theory \cite{BermanPlemmons94}.
Zhang, Qi, and Zhou \cite{ZhangQiZhou14} introduced higher order $\mathscr{M}$-tensors and showed that an even-order symmetric $\mathscr{M}$-tensor is positive semidefinite.
Ding, Qi, and Wei \cite{DingQiWei13} proposed several equivalent definitions of nonsingular $\mathscr{M}$-tensors.
Note that the nonsingular $\mathscr{M}$-tensor is also called the strong $\mathscr{M}$-tensor in some literature.
Ding and Wei \cite{DingWei16} proved that there exists a unique positive solution of any polynomial system of equations whose coefficient tensor is a nonsingular $\mathscr{M}$-tensor and the right-hand side is a positive vector. They proposed an iterative algorithm for solving such systems.
Furthermore, Han \cite{Han2017}, Xie, Jin, and Wei \cite{XieJinWei17} and Li, Xie, and Xu \cite{LiXieXu17} also proposed other numerical methods. Very recently, Bai, He, Ling, and Zhou \cite{BHLZ2018} and Li, Guan, and Wang \cite{LGW2018} concerned the nonnegative solutions for the M-tensor equations.

Actually, the above $\mathscr{M}$-structure is defined with respect to the tensor eigenvalues introduced by Qi \cite{Qi05}.
According to \cite[Chapter 2]{QiLuo17}, a tensor whose M-eigenvalues are all positive (or nonnegative) is said to be M-positive (semi)definite.
We shall define the $\mathscr{M}$-tensors with respect to the M-eigenvalues, which will be shown to be M-positive semidefinite.
Subsequently, we can find a large class of tensors satisfying the strong ellipticity condition.

The rest of the paper is organized as follows.
We briefly introduce the strong ellipticity condition and its relationship with several types of positive definiteness in Section 2.
In Section 3,  a sufficient condition is given to verify the strong ellipticity of an elasticity tensor with an alternating projection algorithm.
Next, we investigate the M-spectral radius of nonnegative elasticity tensors as preparation for defining the elasticity $\mathscr{M}$-tensors in Section 4.
Then we introduce the elasticity $\mathscr{M}$-tensors and the nonsingular elasticity $\mathscr{M}$-tensors in Section 5, and prove their M-positive (semi)definiteness and propose other equivalent definitions for nonsingular elasticity $\mathscr{M}$-tensors.
Finally, conclusion remarks will be drawn in Section 6.

\section{Strong ellipticity and positive definiteness}\label{sec_se}

The tensor of elastic moduli for a linearly elastic material represented in a Cartesian coordinate system is a fourth-order three-dimensional tensor $\mathscr{A} = (a_{ijkl}) \in \mathbb{R}^{3 \times 3 \times 3 \times 3}$ which is invariant under the following permutations of indices
\begin{equation}\label{eq_sym}
  a_{ijkl} = a_{jikl} = a_{ijlk}.
\end{equation}
We use $\mathbb{E}_{4,n}$ to denote the set of all fourth-order $n$-dimensional tensors satisfying \eqref{eq_sym}, where $\mathbb{E}_{4,3}$ is exactly the set of all elasticity tensors.
The {\bf strong ellipticity condition (SE-condition)} for a tensor in $\mathbb{E}_{4,n}$ is stated by
\begin{equation}\label{eq_se}
  \mathscr{A} {\bf x}^2 {\bf y}^2 := \sum_{i,j,k,l=1}^n a_{ijkl} x_i x_j y_k y_l > 0
\end{equation}
for any nonzero vectors ${\bf x}, {\bf y} \in \mathbb{R}^n$.

The SE-condition equivalently requires that the optimal value of the following minimization problem is positive:
\begin{equation}\label{eq_min}
\begin{array}{cl}
  \min & \mathscr{A} {\bf x}^2 {\bf y}^2, \\
  {\rm s.t.} & {\bf x}^\top {\bf x} = 1,\ {\bf y}^\top {\bf y} = 1.
\end{array}
\end{equation}
The KKT condition \cite{Bazarra2013} of the minimization problem \eqref{eq_min} can be written as
\begin{equation}\label{eq_meig}
\left\{
\begin{array}{l}
  \mathscr{A} {\bf x} {\bf y}^2 = \lambda {\bf x}, \\
  \mathscr{A} {\bf x}^2 {\bf y} = \lambda {\bf y}, \\
  {\bf x}^\top {\bf x} = 1,\ {\bf y}^\top {\bf y} = 1,
\end{array}
\right.
\end{equation}
where $(\mathscr{A} {\bf x} {\bf y}^2)_i := \sum_{j,k,l=1}^n a_{ijkl} x_j y_k y_l$ and $(\mathscr{A} {\bf x}^2 {\bf y})_l := \sum_{i,j,k=1}^n a_{ijkl} x_i x_j y_k$.
In this formulation, Qi, Dai, and Han \cite{QiDaiHan09} defined the scalar $\lambda \in \mathbb{R}$ and two vectors ${\bf x}, {\bf y} \in \mathbb{R}^n$ as an {\bf M-eigenvalue} and a pair of corresponding {\bf M-eigenvectors} of $\mathscr{A}$, respectively.
Thus, we also call a tensor satisfying the SE-condition to be {\bf M-positive definite (M-PD)} \cite{QiLuo17}.
Similarly, a tensor $\mathscr{A} \in \mathbb{E}_{4,n}$ is said to be {\bf M-positive semidefinite (M-PSD)} \cite{QiLuo17} if $\mathscr{A} {\bf x}^2 {\bf y}^2 \geq 0$ for any vectors ${\bf x}, {\bf y} \in \mathbb{R}^n$.
The following theorem reveals that the M-positive definiteness is equivalent to the positivity of a tensor's M-eigenvalues.

\begin{theorem}[\cite{QiDaiHan09}]
  A tensor in $\mathbb{E}_{4,n}$ is M-positive definite if and only if all of its M-eigenvalues are positive;
  A tensor in $\mathbb{E}_{4,n}$ is M-positive semidefinite if and only if all of its M-eigenvalues are nonnegative.
\end{theorem}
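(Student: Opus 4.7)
The plan is to leverage the compactness of the sphere product in the feasible region of \eqref{eq_min} together with the observation, already noted in the excerpt, that the first-order optimality system of \eqref{eq_min} is precisely the M-eigenvalue system \eqref{eq_meig}.

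For the easy direction, suppose $\mathscr{A}$ is M-positive definite and let $(\lambda, \mathbf{x}, \mathbf{y})$ be any M-eigentriple. I would take the inner product of the first equation of \eqref{eq_meig} with $\mathbf{x}$ (or, equivalently, the second equation with $\mathbf{y}$) to obtain the identity $\mathscr{A}\mathbf{x}^2\mathbf{y}^2 = \lambda\,\mathbf{x}^\top\mathbf{x} = \lambda$. Since $\mathbf{x}$ and $\mathbf{y}$ are unit vectors, hence nonzero, M-positive definiteness forces $\lambda > 0$. The M-positive semidefinite direction is identical with the strict inequality replaced by the weak one.

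For the converse, assume every M-eigenvalue of $\mathscr{A}$ is positive. The feasible set of \eqref{eq_min} is the compact product of two unit spheres, and the objective is a polynomial, hence continuous; the extreme value theorem supplies a minimizer $(\mathbf{x}^*, \mathbf{y}^*)$. The gradients $2\mathbf{x}$ and $2\mathbf{y}$ of the two constraints are nonzero on the feasible set, so the linear independence constraint qualification holds and KKT multipliers exist. Differentiating $\mathscr{A}\mathbf{x}^2\mathbf{y}^2$ and using the symmetries \eqref{eq_sym} to pair up terms, I would recover exactly the system \eqref{eq_meig} at $(\mathbf{x}^*, \mathbf{y}^*)$, so that $(\mathbf{x}^*, \mathbf{y}^*)$ is a genuine pair of M-eigenvectors with M-eigenvalue $\lambda^* = \mathscr{A}{\mathbf{x}^*}^2{\mathbf{y}^*}^2$. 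By hypothesis $\lambda^* > 0$, so the optimal value of \eqref{eq_min} is positive; degree-two homogeneity of the objective in each vector argument then upgrades this to $\mathscr{A}\mathbf{x}^2\mathbf{y}^2 > 0$ for every nonzero pair $(\mathbf{x},\mathbf{y})$, establishing M-positive definiteness. The M-positive semidefinite case is analogous, with weak inequalities throughout.

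The main subtlety I anticipate is that, a priori, the KKT conditions for \eqref{eq_min} yield two distinct Lagrange multipliers $\lambda_1, \lambda_2$ attached to the two spherical constraints, whereas the formulation \eqref{eq_meig} assumes a single scalar $\lambda$. The reconciliation is short but essential: contracting the first KKT equation with $\mathbf{x}^*$ and the second with $\mathbf{y}^*$ shows that both multipliers equal $\mathscr{A}{\mathbf{x}^*}^2{\mathbf{y}^*}^2$ and therefore coincide. Once this identification is noted, the equivalence between the sign of the M-eigenvalues and M-positive (semi)definiteness follows immediately from the two directions above.
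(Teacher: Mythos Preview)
Your argument is correct and complete, including the handling of the two-multiplier subtlety. Note, however, that the paper does not actually supply a proof of this theorem: it is quoted from \cite{QiDaiHan09} and immediately followed by new material, with no proof environment. The surrounding discussion in Section~\ref{sec_se} does sketch precisely the ingredients you use---that the KKT system of \eqref{eq_min} is \eqref{eq_meig} and that the SE-condition is equivalent to the optimal value of \eqref{eq_min} being positive---so your write-up is fully consistent with the paper's framing and would serve as the omitted proof.
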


We define a special tensor $\mathscr{E} \in \mathbb{E}_{4,n}$ by
$$
e_{ijkl} =
\left\{
\begin{array}{ll}
  1, & \text{if } i = j ~\text{and}~ k = l, \\
  0, & \text{otherwise},
\end{array}
\right.
$$
which serves as an  identity element in $\mathbb{E}_{4,n}$. We may call it the {\bf identity tensor} in this paper. When $n=3$, the components of the identity tensor $\mathscr{E}$ are $$e_{1111}= e_{1122}= e_{1133}= e_{2211}= e_{2222}= e_{2233}= e_{3311}= e_{3322}= e_{3333}=1,$$ and others are 0.
It can be verified that $\mathscr{E} {\bf x} {\bf y}^2 = {\bf x} ({\bf y}^\top {\bf y})$, $\mathscr{E} {\bf x}^2 {\bf y} = ({\bf x}^\top {\bf x}) {\bf y}$, and $\mathscr{E} {\bf x}^2 {\bf y}^2 = ({\bf x}^\top {\bf x}) ({\bf y}^\top {\bf y})$.
Hence, we have the following homogeneous definition for M-eigenvalues:
\begin{equation}\label{eq_meig2}
\left\{
\begin{array}{l}
  \mathscr{A} {\bf x} {\bf y}^2 = \lambda \mathscr{E} {\bf x} {\bf y}^2, \\
  \mathscr{A} {\bf x}^2 {\bf y} = \lambda \mathscr{E} {\bf x}^2 {\bf y}.
\end{array}
\right.
\end{equation}
Comparing \eqref{eq_meig} and \eqref{eq_meig2}, we can see that if the triplet $(\lambda,{\bf x},{\bf y})$ satisfies \eqref{eq_meig} then $(\lambda,\alpha{\bf x},\beta{\bf y})$ satisfies \eqref{eq_meig2} for any nonzero real scalar $\alpha,\beta$.
We note that \eqref{eq_meig2} is exactly the KKT condition of the following minimization problem:
$$
\begin{array}{cl}
  \min & \mathscr{A} {\bf x}^2 {\bf y}^2, \\
  {\rm s.t.} & ({\bf x}^\top {\bf x}) ({\bf y}^\top {\bf y}) = 1,
\end{array}
$$
whose optimal value being positive also guarantees the SE-condition.
The following proposition is an observation from the definition of the identity tensor.
\begin{proposition}\label{Pro2.2}
  Let $\mathscr{A}\in \mathbb{E}_{4,n}$.
  Suppose that $\mathscr{B} = \alpha(\mathscr{A}+\beta\mathscr{E})$, where $\alpha,\beta$ are two real scalars.
  Then $\mu$ is an M-eigenvalue of $\mathscr{B}$ if and only if $\mu = \alpha(\lambda + \beta)$ and $\lambda$ is an M-eigenvalue of $\mathscr{A}$.
  Furthermore, $\lambda$ and $\mu$ correspond to the same M-eigenvectors.
\end{proposition}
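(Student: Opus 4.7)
The plan is to proceed by direct substitution into the M-eigenvalue defining equations~\eqref{eq_meig}, exploiting the two key properties already established for the identity tensor: $\mathscr{E}{\bf x}{\bf y}^2 = {\bf x}({\bf y}^\top{\bf y})$ and $\mathscr{E}{\bf x}^2{\bf y} = ({\bf x}^\top{\bf x}){\bf y}$. Because the contractions $\mathscr{C}\mapsto\mathscr{C}{\bf x}{\bf y}^2$ and $\mathscr{C}\mapsto\mathscr{C}{\bf x}^2{\bf y}$ are linear in $\mathscr{C}$, addition and scalar multiplication of tensors in $\mathbb{E}_{4,n}$ pass through these operations without any book-keeping difficulty.

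First I would prove the forward direction. Let $(\lambda,{\bf x},{\bf y})$ be an M-eigentriple of $\mathscr{A}$, so ${\bf x}^\top{\bf x}={\bf y}^\top{\bf y}=1$. Then
\begin{equation*}
\mathscr{B}{\bf x}{\bf y}^2 = \alpha\,\mathscr{A}{\bf x}{\bf y}^2 + \alpha\beta\,\mathscr{E}{\bf x}{\bf y}^2 = \alpha\lambda{\bf x} + \alpha\beta\,{\bf x}({\bf y}^\top{\bf y}) = \alpha(\lambda+\beta){\bf x},
\end{equation*}
and by an entirely parallel computation $\mathscr{B}{\bf x}^2{\bf y} = \alpha(\lambda+\beta){\bf y}$. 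Hence $\mu:=\alpha(\lambda+\beta)$ is an M-eigenvalue of $\mathscr{B}$ with the same pair of M-eigenvectors $({\bf x},{\bf y})$.

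For the converse, the clean case is $\alpha\neq 0$: one simply writes $\mathscr{A} = \alpha^{-1}\mathscr{B} - \beta\mathscr{E}$, which is of exactly the same affine form, and applies the forward direction with the roles of $\mathscr{A},\mathscr{B}$ swapped and with the scalar pair $(\alpha^{-1},-\alpha\beta)$. This shows that every M-eigentriple $(\mu,{\bf x},{\bf y})$ of $\mathscr{B}$ gives an M-eigentriple $(\alpha^{-1}\mu-\beta,{\bf x},{\bf y})$ of $\mathscr{A}$, i.e.\ $\mu=\alpha(\lambda+\beta)$ with $\lambda$ an M-eigenvalue of $\mathscr{A}$ sharing the eigenvectors. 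The degenerate case $\alpha=0$ forces $\mathscr{B}=0$, whose only M-eigenvalue is $0$, and since $\alpha(\lambda+\beta)=0$ for every M-eigenvalue $\lambda$ of $\mathscr{A}$ (which exists by the KKT theory applied to~\eqref{eq_min}), the equivalence still holds trivially with the eigenvector pair of $\mathscr{A}$ serving as one for $\mathscr{B}$.

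There is no substantive obstacle here; the entire argument is a linearity-plus-substitution calculation. The only minor point worth flagging in the write-up is that the ``M-eigenvectors'' in~\eqref{eq_meig} are normalized, so one should verify that the constraints ${\bf x}^\top{\bf x}={\bf y}^\top{\bf y}=1$ are automatically preserved under the correspondence, which is immediate since the eigenvectors themselves are not altered.
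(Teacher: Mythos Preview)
Your proof is correct and follows essentially the same direct-substitution approach as the paper, using linearity of the contractions and the identities $\mathscr{E}{\bf x}{\bf y}^2 = {\bf x}({\bf y}^\top{\bf y})$, $\mathscr{E}{\bf x}^2{\bf y} = ({\bf x}^\top{\bf x}){\bf y}$. The only cosmetic difference is that for the converse you invoke the forward direction with the roles of $\mathscr{A}$ and $\mathscr{B}$ swapped, whereas the paper repeats the explicit computation; both arguments are equivalent.
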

\begin{proof}
On the one hand, if  $\mu$ is an M-eigenvalue of $\mathscr{B}$, then $\mathscr{B} {\bf x} {\bf y}^2 = \mu {\bf x}$ and $ \mathscr{B} {\bf x} {\bf y}^2 = \mathscr{B} {\bf x}^2 {\bf y} = \mu {\bf y}$, where ${\bf x}, {\bf y}\in \mathbb{R}^n$ are the corresponding M-eigenvectors. When $\alpha=0$, the results is obvious. When $\alpha \neq 0$,
$$\alpha \mathscr{A}{\bf x}^2{\bf y} + \alpha\beta{\bf y} = \alpha\mathscr{A}{\bf x}^2{\bf y} + \alpha\beta\mathscr{E}{\bf x}^2{\bf y} = \alpha(\mathscr{A}+\beta\mathscr{E}){\bf x}^2{\bf y} = \mu {\bf y}$$
implies that
$$
\mathscr{A}{\bf x}^2{\bf y} = (\alpha^{-1}\mu-\beta){\bf y}.
$$
Similarly, we have
$$
\mathscr{A}{\bf x}{\bf y}^2 = (\alpha^{-1}\mu-\beta){\bf x}.
$$
Thus $\lambda=\alpha^{-1}\mu-\beta$ is the M-eigenvalue for $ \mathscr{A}$ corresponding to ${\bf x}$ and ${\bf y}$.

On the other hand, when $\mu = \alpha(\lambda + \beta)$ and $\lambda$ is an M-eigenvalue of $\mathscr{A}$ corresponding to the M-eigenvectors ${\bf x}$ and ${\bf y}$, it can get
$$\mu {\bf y}= \alpha(\lambda + \beta){\bf y}=  \alpha \lambda {\bf y}  + \alpha \beta{\bf y}=  \alpha \mathscr{A}{\bf x}^2{\bf y} + \alpha \beta\mathscr{E}{\bf x}^2{\bf y}= \alpha(\mathscr{A} + \beta\mathscr{E}) {\bf x}^2{\bf y} = \mathscr{B}{\bf x}^2{\bf y},$$
$$\mu {\bf x}= \alpha(\lambda + \beta){\bf x}=  \alpha \lambda {\bf x}  + \alpha \beta{\bf x}=  \alpha \mathscr{A}{\bf x}{\bf y}^2 + \alpha \beta\mathscr{E}{\bf x}{\bf y}^2= \alpha(\mathscr{A} + \beta\mathscr{E}) {\bf x}{\bf y}^2 = \mathscr{B}{\bf x}{\bf y}^2.$$
Hence, $\mu$ is an M-eigenvalue of $\mathscr{B}$ corresponding to the same M-eigenvectors ${\bf x}$ and ${\bf y}$.
\end{proof}

There are two common ways to unfold a tensor in $\mathbb{E}_{4,n}$ into  $n^2$-by-$n^2$ matrices:
\begin{enumerate}[(i)]
  \item ${\bf A}_x =
       \begin{bmatrix}
       {\bf A}_x^{(1,1)} & {\bf A}_x^{(1,2)} & \cdots & {\bf A}_x^{(1,n)} \\
       {\bf A}_x^{(2,1)} & {\bf A}_x^{(2,2)} & \cdots & {\bf A}_x^{(2,n)} \\
       \vdots & \vdots & \ddots & \vdots \\
       {\bf A}_x^{(n,1)} & {\bf A}_x^{(n,2)} & \cdots & {\bf A}_x^{(n,n)}
       \end{bmatrix} \in \mathbb{R}^{n^2 \times n^2}$,
  \item ${\bf A}_y =
       \begin{bmatrix}
       {\bf A}_y^{(1,1)} & {\bf A}_y^{(1,2)} & \cdots & {\bf A}_y^{(1,n)} \\
       {\bf A}_y^{(2,1)} & {\bf A}_y^{(2,2)} & \cdots & {\bf A}_y^{(2,n)} \\
       \vdots & \vdots & \ddots & \vdots \\
       {\bf A}_y^{(n,1)} & {\bf A}_y^{(n,2)} & \cdots & {\bf A}_y^{(n,n)}
       \end{bmatrix} \in \mathbb{R}^{n^2 \times n^2}$,
\end{enumerate}
where ${\bf A}_x^{(k,l)}:= \mathscr{A}(:,:,k,l)$($k,l = 1,\cdots,n$) and ${\bf A}_y^{(i,j)}:= \mathscr{A}(i,j,:,:)$($i,j = 1,\cdots,n$).
Note that ${\bf A}_x$ and ${\bf A}_y$ are permutation similar to each other, i.e., there is a permutation matrix ${\bf P}$ such that ${\bf A}_x = {\bf P}^\top {\bf A}_y {\bf P}$.
Then $\mathscr{A}$ is M-PD or M-PSD if ${\bf A}_x$ (or equivalently ${\bf A}_y$) is PD or PSD, respectively.
This can be proved by noticing that
$$
\mathscr{A} {\bf x}^2 {\bf y}^2
= ({\bf y} \otimes {\bf x})^\top {\bf A}_x ({\bf y} \otimes {\bf x})
= ({\bf x} \otimes {\bf y})^\top {\bf A}_y ({\bf x} \otimes {\bf y}),
$$
where $\otimes$ denotes the Kronecker product \cite{HornJohnson13}.
Thus we call $\mathscr{A}$ S-positive (semi)definite if ${\bf A}_x$ or ${\bf A}_y$ is positive (semi)definite, and call the eigenvalues of ${\bf A}_x$ or ${\bf A}_y$ the S-eigenvalues of $\mathscr{A}$.
The S-positive definiteness is a sufficient condition for the M-positive definiteness, but the converse is not true. A counter example is as follows.
\begin{example}
  Consider the case $n=3$. Let $\mathscr{A} \in \mathbb{E}_{4,3}$ be defined by
  $$
    a_{1111}= a_{2222}= a_{3333}=2, ~~a_{1221}= a_{2121}= a_{2112}= a_{1212}=1,
  $$
  and all other entries equal to zero. Then we have
  $$\mathscr{A} {\bf x}^2 {\bf y}^2=2(x_1y_1+x_2y_2)^2+2x_3^2y_3^2,$$
  thus $\mathscr{A}$ is M-PSD apparently, while the unfolding matrix
  $$
  {\bf A}_x = {\bf A}_y =
  \left[
  \begin{array}{ccc|ccc|ccc}
   2 & 0 & 0 & 0 & 1 & 0 & 0 & 0 & 0 \\
   0 & 0 & 0 & 1 & 0 & 0 & 0 & 0 & 0 \\
   0 & 0 & 0 & 0 & 0 & 0 & 0 & 0 & 0 \\ \hline
   0 & 1 & 0 & 0 & 0 & 0 & 0 & 0 & 0 \\
   1 & 0 & 0 & 0 & 2 & 0 & 0 & 0 & 0 \\
   0 & 0 & 0 & 0 & 0 & 0 & 0 & 0 & 0 \\ \hline
   0 & 0 & 0 & 0 & 0 & 0 & 0 & 0 & 0 \\
   0 & 0 & 0 & 0 & 0 & 0 & 0 & 0 & 0 \\
   0 & 0 & 0 & 0 & 0 & 0 & 0 & 0 & 2 \\
  \end{array}
  \right]
  $$
 is not positive semidefinite.
\end{example}

Next, we need to introduce several more notations for convenience.
Let $\mathscr{A} \in \mathbb{E}_{4,n}$ and ${\bf x},{\bf y} \in \mathbb{R}^n$.
We define two $n$-by-$n$ matrices $\mathscr{A} {\bf x}^2  \in \mathbb{R}^{n \times n}$ and $\mathscr{A} {\bf y}^2 \in \mathbb{R}^{n \times n}$ by
\[
\begin{split}
  (\mathscr{A} {\bf x}^2 )_{kl} &:= \sum_{i,j=1}^n a_{ijkl} x_i x_j, \quad k,l = 1,2,\dots,n, \\
  (\mathscr{A} {\bf y}^2)_{ij} &:= \sum_{k,l=1}^n a_{ijkl} y_k y_l, \quad i,j = 1,2,\dots,n.
\end{split}
\]
We note that
$$
\mathscr{A} {\bf x}^2
=\begin{bmatrix}
  {\bf x}^\top {\bf A}_x^{(1,1)} {\bf x} & {\bf x}^\top {\bf A}_x^{(1,2)} {\bf x} & \cdots & {\bf x}^\top {\bf A}_x^{(1,n)} {\bf x} \\
  {\bf x}^\top {\bf A}_x^{(2,1)} {\bf x} & {\bf x}^\top {\bf A}_x^{(2,2)} {\bf x} & \cdots & {\bf x}^\top {\bf A}_x^{(2,n)} {\bf x} \\
  \vdots & \vdots & \ddots & \vdots \\
  {\bf x}^\top {\bf A}_x^{(n,1)} {\bf x} & {\bf x}^\top {\bf A}_x^{(n,2)} {\bf x} & \cdots & {\bf x}^\top {\bf A}_x^{(n,n)} {\bf x}
\end{bmatrix},
$$ and
$$
\mathscr{A} {\bf y}^2
=\begin{bmatrix}
  {\bf y}^\top {\bf A}_y^{(1,1)} {\bf y} & {\bf y}^\top {\bf A}_y^{(1,2)} {\bf y} & \cdots & {\bf y}^\top {\bf A}_y^{(1,n)} {\bf y} \\
  {\bf y}^\top {\bf A}_y^{(2,1)} {\bf y} & {\bf y}^\top {\bf A}_y^{(2,2)} {\bf y} & \cdots & {\bf y}^\top {\bf A}_y^{(2,n)} {\bf y} \\
  \vdots & \vdots & \ddots & \vdots \\
  {\bf y}^\top {\bf A}_y^{(n,1)} {\bf y} & {\bf y}^\top {\bf A}_y^{(n,2)} {\bf y} & \cdots & {\bf y}^\top {\bf A}_y^{(n,n)} {\bf y}
\end{bmatrix}.
$$

Furthermore, it is straightforward to verify that
\begin{equation}\label{eq_Axx}
\begin{split}
  &\mathscr{A} {\bf x}^2 {\bf y}^2 = {\bf y}^\top (\mathscr{A} {\bf x}^2 ) {\bf y} = {\bf x}^\top (\mathscr{A}  {\bf y}^2) {\bf x}, \\
  &\mathscr{A} {\bf x}^2 {\bf y} = (\mathscr{A} {\bf x}^2) {\bf y}, \quad \mathscr{A} {\bf x} {\bf y}^2 = (\mathscr{A}  {\bf y}^2) {\bf x}.
\end{split}
\end{equation}

The symmetries in $\mathscr{A}$ imply that both  $\mathscr{A} {\bf x}^2 $ and $\mathscr{A} {\bf y}^2$ are symmetric matrix. According to Eq.\eqref{eq_Axx}, we can prove the following necessary and sufficient condition for the M-positive (semi)definiteness.
\begin{proposition}
  Let $\mathscr{A} \in \mathbb{E}_{4,n}$. Then $\mathscr{A}$ is M-PD or M-PSD if and only if the matrix $\mathscr{A} {\bf x}^2 $ (or $\mathscr{A} {\bf y}^2$) is PD or PSD for each nonzero ${\bf x}\in \mathbb{R}^n$ (or ${\bf y}\in \mathbb{R}^n$), respectively.
\end{proposition}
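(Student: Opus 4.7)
The plan is to reduce the claim directly to the identity \eqref{eq_Axx}, namely
$\mathscr{A}{\bf x}^2{\bf y}^2 = {\bf y}^\top(\mathscr{A}{\bf x}^2){\bf y} = {\bf x}^\top(\mathscr{A}{\bf y}^2){\bf x}$,
combined with the observation already noted in the paper that the symmetries \eqref{eq_sym} of $\mathscr{A}$ force $\mathscr{A}{\bf x}^2$ and $\mathscr{A}{\bf y}^2$ to be symmetric matrices. Once we have these two ingredients, the whole statement becomes a matter of exchanging quantifiers over ${\bf x}$ and ${\bf y}$ in the quadratic form ${\bf y}^\top(\mathscr{A}{\bf x}^2){\bf y}$. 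I will treat only the ${\bf x}$-version and the M-PD case in detail; the ${\bf y}$-version follows by interchanging the roles of ${\bf x}$ and ${\bf y}$ (using the second equality in \eqref{eq_Axx}), and the M-PSD case is obtained by replacing strict inequalities with non-strict ones throughout.

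For the sufficiency direction, I would assume that $\mathscr{A}{\bf x}^2$ is positive definite for every nonzero ${\bf x} \in \mathbb{R}^n$. Then for any nonzero pair $({\bf x},{\bf y})$, applying \eqref{eq_Axx} yields $\mathscr{A}{\bf x}^2{\bf y}^2 = {\bf y}^\top(\mathscr{A}{\bf x}^2){\bf y} > 0$, which is exactly the SE-condition \eqref{eq_se}, i.e., $\mathscr{A}$ is M-PD. For the necessity direction, assume $\mathscr{A}$ is M-PD and fix an arbitrary nonzero ${\bf x} \in \mathbb{R}^n$. For every nonzero ${\bf y} \in \mathbb{R}^n$, \eqref{eq_Axx} gives ${\bf y}^\top(\mathscr{A}{\bf x}^2){\bf y} = \mathscr{A}{\bf x}^2{\bf y}^2 > 0$, so the symmetric matrix $\mathscr{A}{\bf x}^2$ has a strictly positive quadratic form at every nonzero vector and is therefore positive definite.

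I do not anticipate any real obstacle: everything boils down to the identity \eqref{eq_Axx} plus the symmetry of $\mathscr{A}{\bf x}^2$ and $\mathscr{A}{\bf y}^2$, both of which are already in the paper immediately before the proposition. The only minor subtlety to be careful about is logical: the condition ``$\mathscr{A}{\bf x}^2$ is PD for each nonzero ${\bf x}$'' is a family of matrix positive-definiteness conditions (one for each ${\bf x}$), so one must make sure to quantify ${\bf y}$ inside the fixed choice of ${\bf x}$ in both directions, which is exactly how the argument above is arranged.
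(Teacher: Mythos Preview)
Your proposal is correct and follows essentially the same approach as the paper's own proof: both directions are obtained directly from the identity \eqref{eq_Axx} together with the symmetry of $\mathscr{A}{\bf x}^2$, by interchanging the quantifiers over ${\bf x}$ and ${\bf y}$ in the quadratic form ${\bf y}^\top(\mathscr{A}{\bf x}^2){\bf y}$. The paper's proof is written in nearly identical terms, so there is nothing substantively different to compare.
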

\begin{proof}
 On one side, if $\mathscr{A}$ is M-PD, then for any ${\bf x}, {\bf y} \in \mathbb{R}^n\backslash\{{\bf 0}\}$, $\mathscr{A} {\bf x}^2 {\bf y}^2 > 0$. This means that ${\bf y}^{\top}(\mathscr{A} {\bf x}^2){\bf y} > 0$, for  any nonzero ${\bf y} \in \mathbb{R}^n$. Hence, the matrix $\mathscr{A} {\bf x}^2 $ is PD for each nonzero ${\bf x}\in \mathbb{R}^n$. Similarity, $\mathscr{A} {\bf y}^2 $ is PD for each nonzero ${\bf y}\in \mathbb{R}^n$. On the other side, when $\mathscr{A} {\bf x}^2 $ (or $\mathscr{A} {\bf y}^2$) is PD for each nonzero ${\bf x}\in \mathbb{R}^n$ (or ${\bf y}\in \mathbb{R}^n$), it has ${\bf y}^{\top}(\mathscr{A} {\bf x}^2){\bf y} > 0$ (or ${\bf x}^{\top}(\mathscr{A} {\bf y}^2){\bf x} > 0$ ) for any nonzero ${\bf y} \in \mathbb{R}^n$ (or ${\bf x}\in \mathbb{R}^n$), which means that $\mathscr{A}$ is M-PD. \\
 Similarity, $\mathscr{A}$ is M-PSD if and only if the matrix $\mathscr{A} {\bf x}^2 $ (or $\mathscr{A} {\bf y}^2$) is PSD for each ${\bf x}\in \mathbb{R}^n$ (or ${\bf y}\in \mathbb{R}^n$), respectively.
\end{proof}

Generally speaking, the above necessary and sufficient condition is as hard as the SE-condition to check. However, it motivates some checkable sufficient conditions. Hence, we will introduce another sufficient condition to verify the SE-condition.

\section{A sufficient condition with a verification algorithm}\label{sec_suf}

Recall that every positive semidefinite matrix can be decomposed into the sum of some rank-one positive semidefinite matrices and the minimal number of terms is exactly its rank \cite{HornJohnson13}.
Thus we have the following sufficient condition for a tensor $\mathscr{A}$ to be M-PSD that
\begin{equation}\label{eq_rank1+}
  \mathscr{A} {\bf y}^2 = \sum_{s = 1}^r \alpha_s {\bf f}_s({\bf y}) {\bf f}_s({\bf y})^\top, \quad \alpha_s > 0,
\end{equation}
where each ${\bf f}_s({\bf y})$ is a homogeneous function of degree one, i.e., ${\bf f}_s({\bf y}) = {\bf U}_s {\bf y}$ for $s = 1,2,\dots,r$.
Obviously, any matrix $\mathscr{A} {\bf y}^2$ in the above form is PSD and thus $\mathscr{A}$ is M-PSD.
Furthermore, if $\mathscr{A}$ is M-PD, then the number of terms in the summation should be no less than $n$, i.e., $r \geq n$.
Denote the entries of each ${\bf U}_s$ as $u_{ij}^{(s)}$ ($i,j=1,\cdots,n$).
Then \eqref{eq_rank1+} reads
\[
\begin{split}
\sum_{k,l=1}^n a_{ijkl} y_k y_l
&= \sum_{s=1}^r \alpha_s \Big( \sum_{k=1}^n u_{ik}^{(s)} y_k \Big) \Big( \sum_{l=1}^n u_{jl}^{(s)} y_l \Big) \\
&= \sum_{k,l=1}^n \Big( \sum_{s=1}^r \alpha_s u_{ik}^{(s)} u_{jl}^{(s)} \Big) y_k y_l.
\end{split}
\]
Therefore, given ${\bf U}_s$ ($s = 1,2,\dots,r$), the entries of $\mathscr{A}$ are uniquely determined by
\begin{equation}\label{eq_con}
  a_{ijkl} = \frac{1}{2} \sum_{s=1}^r \alpha_s \Big( u_{ik}^{(s)} u_{jl}^{(s)} + u_{jk}^{(s)} u_{il}^{(s)} \Big),
\end{equation}
which satisfies the symmetries $a_{ijkl} = a_{jikl} = a_{ijlk}$.

Next, we shall discuss when a tensor in $\mathbb{E}_{4,n}$ can be represented by \eqref{eq_con}.
Denote another fourth-order three-dimensional tensor $\mathscr{B}$ by
$$
b_{ijkl} = \sum_{s=1}^r \alpha_s u_{ik}^{(s)} u_{jl}^{(s)}.
$$
Note that $\mathscr{B}$ may not in the set $\mathbb{E}_{4,n}$, i.e., it is not required to obey \eqref{eq_sym}, but it still satisfies a weaker symmetry that $b_{ijkl} = b_{jilk}$.
It can be seen that its unfolding ${\bf B}$ is a PSD matrix from
$$
{\bf B} = \sum_{s=1}^r \alpha_s {\bf u}_s {\bf u}_s^\top,
$$
where ${\bf u}_s$ is the unfolding (or called vectorization) of ${\bf U}_s$ ($s = 1,2,\dots,r$).
Hence $\mathscr{B}$ is S-PSD since all the coefficients $\alpha_s$ are positive.
Furthermore, comparing the entries of $\mathscr{A}$ and $\mathscr{B}$, we will find that
$$
a_{ijkl} = a_{jikl} = \frac{1}{2} (b_{ijkl} + b_{jikl}), \quad i,j,k,l = 1,2,3,
$$
and thus
$$
\mathscr{A} {\bf x}^2 {\bf y}^2 = \mathscr{B} {\bf x}^2 {\bf y}^2 = \mathscr{B} \big( {\bf x} {\bf y}^\top \big)^2.
$$
Therefore $\mathscr{A}$ is M-PD or M-PSD when $\mathscr{B}$ is S-PD or S-PSD, respectively.

Given a fourth-order tensor $\mathscr{A} \in \mathbb{E}_{4,n}$, we denote
$$
\mathbb{T}_{\mathscr{A}} := \{ \mathscr{T}:\, t_{ijkl} = t_{jilk},\, t_{ijkl} + t_{jikl} = 2 a_{ijkl} \}.
$$
We also denote the set of all fourth-order S-PSD tensors as
$$
\mathbb{S} := \{ \mathscr{T}:\, t_{ijkl} = t_{jilk},\, \mathscr{T} \text{ is S-PSD} \}.
$$
Note that both $\mathbb{T}_{\mathscr{A}}$ and $\mathbb{S}$ are closed convex sets, where $\mathbb{T}_{\mathscr{A}}$ is a linear subspace of the whole space of all the fourth-order three-dimensional tensor with $t_{ijkl} = t_{jilk}$ and $\mathbb{S}$ is isomorphic with the nine-by-nine symmetric PSD matrix cone.
Furthermore, we have actually proved the following sufficient condition for a tensor to be M-PD or M-PSD.
\begin{theorem}
  Let $\mathscr{A} \in \mathbb{E}_{4,n}$.
  If $\mathbb{T}_{\mathscr{A}} \cap \mathbb{S} \neq \emptyset$, then $\mathscr{A}$ is M-PSD;
  If $\mathbb{T}_{\mathscr{A}} \cap (\mathbb{S} \setminus \partial\mathbb{S}) \neq \emptyset$, then $\mathscr{A}$ is M-PD.
\end{theorem}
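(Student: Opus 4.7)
The plan is to follow essentially the construction that precedes the theorem statement and observe that the theorem is a direct distillation of that discussion. Specifically, given $\mathscr{A} \in \mathbb{E}_{4,n}$, I would pick any witness $\mathscr{B} \in \mathbb{T}_{\mathscr{A}} \cap \mathbb{S}$, exploit the defining conditions of $\mathbb{T}_{\mathscr{A}}$ to relate the form $\mathscr{A}{\bf x}^2{\bf y}^2$ to $\mathscr{B}{\bf x}^2{\bf y}^2$, and then use the S-PSD structure of $\mathscr{B}$ to bound the latter.

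First I would verify that $\mathscr{A}{\bf x}^2{\bf y}^2 = \mathscr{B}{\bf x}^2{\bf y}^2$ for every ${\bf x},{\bf y}\in\mathbb{R}^n$. The point is that in the sum $\sum_{ijkl} a_{ijkl} x_i x_j y_k y_l$, the monomial $x_i x_j y_k y_l$ is symmetric under swapping $i$ and $j$; hence substituting $a_{ijkl}=\frac12(b_{ijkl}+b_{jikl})$ (which is exactly what $\mathscr{B}\in\mathbb{T}_{\mathscr{A}}$ enforces) and symmetrizing the dummy indices yields $\sum_{ijkl}b_{ijkl}x_ix_jy_ky_l$.

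Second, I would rewrite this as a quadratic form on a Kronecker vector. Setting ${\bf z} := {\bf x}\otimes{\bf y}\in\mathbb{R}^{n^2}$ (the vectorization of ${\bf x}{\bf y}^\top$) and letting ${\bf B}$ denote the natural $n^2\times n^2$ unfolding of $\mathscr{B}$ compatible with its $b_{ijkl}=b_{jilk}$ symmetry, one checks by direct index manipulation that $\mathscr{B}{\bf x}^2{\bf y}^2 = {\bf z}^\top {\bf B}\,{\bf z}$. Because $\mathscr{B}\in\mathbb{S}$, the matrix ${\bf B}$ is symmetric PSD, so ${\bf z}^\top {\bf B}\,{\bf z}\ge 0$, which gives $\mathscr{A}{\bf x}^2{\bf y}^2\ge 0$ for all ${\bf x},{\bf y}$, i.e.\ $\mathscr{A}$ is M-PSD.

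For the strict (M-PD) statement, I would take $\mathscr{B}\in\mathbb{T}_{\mathscr{A}}\cap(\mathbb{S}\setminus\partial\mathbb{S})$, so that the unfolding ${\bf B}$ lies in the interior of the symmetric PSD cone and is therefore strictly positive definite. For any nonzero ${\bf x},{\bf y}\in\mathbb{R}^n$, the Kronecker product ${\bf z}={\bf x}\otimes{\bf y}$ is a nonzero vector in $\mathbb{R}^{n^2}$ (since $\|{\bf z}\|=\|{\bf x}\|\,\|{\bf y}\|>0$), so ${\bf z}^\top{\bf B}\,{\bf z}>0$, giving $\mathscr{A}{\bf x}^2{\bf y}^2>0$ and M-positive definiteness. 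The only genuinely non-routine step is the identification of $\partial\mathbb{S}$ with the boundary of the PSD cone under the unfolding isomorphism; but since $\mathbb{S}$ is by construction isomorphic to the symmetric PSD matrix cone (as noted just before the theorem), its interior corresponds precisely to strictly positive-definite unfoldings, and there is nothing further to check.
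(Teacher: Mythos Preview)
Your proposal is correct and follows essentially the same argument as the paper: the theorem is stated immediately after the discussion that already establishes $\mathscr{A}{\bf x}^2{\bf y}^2=\mathscr{B}{\bf x}^2{\bf y}^2=({\bf y}\otimes{\bf x})^\top{\bf B}({\bf y}\otimes{\bf x})$ for any $\mathscr{B}\in\mathbb{T}_{\mathscr{A}}$, so S-PSD (resp.\ S-PD) of $\mathscr{B}$ yields M-PSD (resp.\ M-PD) of $\mathscr{A}$. Your write-up simply makes explicit the symmetrization step and the interior/strict-PD identification, which the paper leaves as part of the preceding discussion.
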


A method called projections onto convex sets (POCS) \cite{BauschkeBorwein96,EscalanteRaydan11} is often employed to check whether the intersection of two closed convex sets is empty or not.
POCS is also known as the alternating projection algorithm.
Denote ${\cal P}_1$ and ${\cal P}_2$ as the projections onto $\mathbb{T}_{\mathscr{A}}$ and $\mathbb{S}$, respectively.
Then POCS is stated by
$$
\left\{
\begin{array}{l}
  \mathscr{B}^{(t+1)} = {\cal P}_2( \mathscr{A}^{(t)} ), \\
  \mathscr{A}^{(t+1)} = {\cal P}_1( \mathscr{B}^{(t+1)} ),
\end{array}
\right.
\quad t = 0,1,2,\dots.
$$
The algorithm can be described as the following iterative scheme:
\begin{equation}\label{alg_pocs}
\left\{
\begin{array}{l}
  \text{Eigendecomposition } {\bf A}^{(t)} = {\bf V}^{(t)} {\bf D}^{(t)} ({\bf V}^{(t)})^\top, \\
  {\bf B}^{(t+1)} = {\bf V}^{(t)} {\bf D}_+^{(t)} ({\bf V}^{(t)})^\top, \\
  a_{iikl}^{(t+1)} = a_{iikl} \text{ for } i,k,l = 1,\cdots,n, \\
  a_{ijkk}^{(t+1)} = a_{ijkk} \text{ for } i,j,k = 1,\cdots,n, \\
  a_{ijkl}^{(t+1)} = a_{ijkl} + \frac{1}{2} (b_{ijkl}^{(t+1)} - b_{jikl}^{(t+1)}) \text{ for } i \neq j,\ k \neq l,
\end{array}
\right.
\quad t = 0,1,2,\dots.
\end{equation}
where $\mathscr{A}^{(0)} = \mathscr{A}$, ${\bf A}^{(t)}$ and ${\bf B}^{(t)}$ are the unfolding matrices of $\mathscr{A}^{(t)}$ and $\mathscr{B}^{(t)}$ respectively, and ${\bf D}_+^{(t)} = {\rm diag}\big( \max(d_{ii}^{(t)}, 0) \big)$.
The convergence of the alternating projection method between two closed convex sets is known for a long time \cite{CheneyGoldstein59}.

\begin{theorem}
  Let $\mathscr{A} \in \mathbb{E}_{4,n}$. If $\mathbb{T}_{\mathscr{A}} \cap \mathbb{S} \neq \emptyset$, then the sequences $\{ \mathscr{A}^{(t)} \}$ and $\{ \mathscr{B}^{(t)} \}$ produced by Algorithm \eqref{alg_pocs} both converge to a point $\mathscr{A}^\ast \in \mathbb{T}_{\mathscr{A}} \cap \mathbb{S}$.
\end{theorem}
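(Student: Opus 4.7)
The plan is to reduce the claim to the classical Cheney--Goldstein alternating projection theorem \cite{CheneyGoldstein59}, which asserts convergence of alternating projections between any two closed convex sets in a Hilbert space whose intersection is nonempty. I would equip the ambient space of fourth-order tensors satisfying $t_{ijkl}=t_{jilk}$ with the Frobenius inner product, so that $\mathbb{T}_{\mathscr{A}}$ is a closed affine subspace (a single point shifted from a linear subspace, cut out by linear equations $t_{ijkl}+t_{jikl}=2a_{ijkl}$) and $\mathbb{S}$ is closed and convex because, via the unfolding, it is linearly isomorphic to the cone of $n^2$-by-$n^2$ symmetric positive semidefinite matrices.

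Next I would verify that the two steps in the iterative scheme \eqref{alg_pocs} are precisely the orthogonal projections $\mathcal{P}_2$ onto $\mathbb{S}$ and $\mathcal{P}_1$ onto $\mathbb{T}_{\mathscr{A}}$. For $\mathcal{P}_2$: after unfolding to a matrix ${\bf A}^{(t)}$, the nearest S-PSD tensor in Frobenius norm corresponds to the nearest symmetric PSD matrix, which is a textbook fact obtained by taking an eigendecomposition and replacing each negative eigenvalue by zero, yielding ${\bf V}^{(t)}{\bf D}_+^{(t)}({\bf V}^{(t)})^\top$. For $\mathcal{P}_1$: the constraints of $\mathbb{T}_{\mathscr{A}}$ decouple into independent pairs $(i,j,k,l)$ and $(j,i,k,l)$. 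When $i=j$, or when $k=l$, the constraint $t_{ijkl}+t_{jikl}=2a_{ijkl}$ combined with $t_{ijkl}=t_{jilk}$ forces $t_{ijkl}=a_{ijkl}$, matching the middle two lines of the scheme. When $i\neq j$ and $k\neq l$, a one-line Lagrangian calculation on the pair shows that the nearest point to $(b_{ijkl},b_{jikl})$ satisfying the single affine constraint is obtained by a symmetric shift, giving
\[
t_{ijkl}=a_{ijkl}+\tfrac{1}{2}(b_{ijkl}-b_{jikl}),
\]
which is exactly the last line of \eqref{alg_pocs}.

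Once the scheme is identified with genuine orthogonal projections, the hypothesis $\mathbb{T}_{\mathscr{A}}\cap\mathbb{S}\neq\emptyset$ lets me invoke Cheney--Goldstein directly: both $\{\mathscr{A}^{(t)}\}\subset\mathbb{T}_{\mathscr{A}}$ and $\{\mathscr{B}^{(t)}\}\subset\mathbb{S}$ converge, and since $\|\mathscr{A}^{(t)}-\mathscr{B}^{(t)}\|$ decreases to the distance between the two sets, which is zero by assumption, the two limits coincide at a common point $\mathscr{A}^\ast\in\mathbb{T}_{\mathscr{A}}\cap\mathbb{S}$. Closedness of both sets ensures the common limit lies in the intersection.

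The main obstacle, and really the only nontrivial bookkeeping, is confirming that the coordinatewise update for $\mathcal{P}_1$ is indeed the Frobenius-orthogonal projection: one has to be careful that the constraints couple only in disjoint pairs, that the symmetry $t_{ijkl}=t_{jilk}$ of the ambient space is preserved by the update, and that the fixed entries (diagonal cases $i=j$ or $k=l$) are handled consistently with both constraints simultaneously. Once this is checked, the convergence conclusion is immediate from the classical theorem, with no additional smoothness or compactness input required.
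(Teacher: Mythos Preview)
Your proposal is correct and follows exactly the route the paper takes: the paper does not give a standalone proof but simply invokes the Cheney--Goldstein alternating projection theorem \cite{CheneyGoldstein59} after noting that $\mathbb{T}_{\mathscr{A}}$ and $\mathbb{S}$ are closed convex sets. Your write-up is in fact more complete than the paper's, since you explicitly verify that the two steps of scheme \eqref{alg_pocs} realize the Frobenius-orthogonal projections $\mathcal{P}_1$ and $\mathcal{P}_2$, a point the paper asserts without justification.
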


Because the convergence of POCS requires the involved convex sets to be closed, Algorithm \eqref{alg_pocs} is only suitable for identifying the M-positive semidefiniteness.
If we want to check the M-positive definiteness, then some modifications are needed.
Note that $\mathscr{E} {\bf x}^2 {\bf y}^2 = ({\bf x}^\top {\bf x}) ({\bf y}^\top {\bf y})$.
Hence $\mathscr{E}$ is M-PD, which implies that $\mathscr{A}$ is M-PD if and only if $\mathscr{A} - \epsilon \mathscr{E}$ is M-PSD for some sufficiently small $\epsilon > 0$.
From such observation, we can apply POCS to $\mathscr{A} - \epsilon \mathscr{E}$ with a very small $\epsilon$.
If the iteration converges and both $\{ \mathscr{A}^{(t)} \}$ and $\{ \mathscr{B}^{(t)} \}$ converge to the same tensor, then we can conclude that $\mathscr{A}$ is M-PD, i.e., the strong ellipticity holds.

\section{Nonnegative elasticity tensors}\label{sec_nonneg}

We have a well-known theory about nonnegative matrices called the Perron-Frobenius theorem \cite{BermanPlemmons94}, which states that the spectral radius of any nonnegative matrix is an eigenvalue with a nonnegative eigenvector and the eigenvector is positive and unique if the matrix is irreducible.
In the past decades, the Perron-Frobenius theorem has been extended to higher order tensors by Chang, Pearson, and Zhang \cite{ChangPearsonZhang08} and Yang and Yang \cite{YangYang11,YangYang10}.
One may refer to \cite[Chapter 3]{QiLuo17} for a whole picture of the nonnegative tensor theory.
We will also obtain similar results for nonnegative elasticity tensors in this section.

From the discussions in Section \ref{sec_se}, we have variational forms of the extremal M-eigenvalues.
Let $\mathscr{B} \in \mathbb{E}_{4,n}$.
Denote $\lambda_{\max}(\mathscr{B})$ and $\lambda_{\min}(\mathscr{B})$ as the maximal and the minimal M-eigenvalues of $\mathscr{B}$, respectively.
Then
\begin{equation}\label{eq_maxmin}
\begin{split}
  \lambda_{\max}(\mathscr{B}) &= \max \big\{ \mathscr{B} {\bf x}^2 {\bf y}^2:\, {\bf x},{\bf y} \in \mathbb{R}^n,\, {\bf x}^{\top}{\bf x}={\bf y}^{\top}{\bf y}=1 \big\}, \\
  \lambda_{\min}(\mathscr{B}) &= \min \big\{ \mathscr{B} {\bf x}^2 {\bf y}^2:\, {\bf x},{\bf y} \in \mathbb{R}^n,\, {\bf x}^{\top}{\bf x}={\bf y}^{\top}{\bf y}=1 \big\}.
\end{split}
\end{equation}
The maximal absolute value of all the M-eigenvalues is called the M-spectral radius of a tensor in $\mathbb{E}_{4,n}$, denoted by $\rho_M(\cdot)$.
Apparently, the M-spectral radius is equal to the greater one of the absolute values of the maximal and the minimal M-eigenvalues.
The following theorem reveals that $\rho_M(\mathscr{B}) = \lambda_{\max}(\mathscr{B})$ when $\mathscr{B} \in \mathbb{E}_{4,n}$ is nonnegative.

\begin{theorem}\label{thm_nn}
  The M-spectral radius of any nonnegative tensor in $\mathbb{E}_{4,n}$ is exactly its greatest M-eigenvalue. Furthermore, there is a pair of nonnegative M-eigenvectors corresponding to the M-spectral radius.
\end{theorem}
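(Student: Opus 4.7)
The plan is to exploit the variational characterization \eqref{eq_maxmin} together with the entrywise nonnegativity of $\mathscr{B}$, in direct analogy with the classical Perron--Frobenius proof for nonnegative matrices. First I would note that the quartic form $({\bf x},{\bf y})\mapsto \mathscr{B}{\bf x}^2{\bf y}^2$ is continuous on the compact product of unit spheres, so the maximum in \eqref{eq_maxmin} is attained at some pair $({\bf x}^\ast,{\bf y}^\ast)$. Since the two sphere constraints are smooth with linearly independent gradients, LICQ holds, so this maximizer must satisfy the KKT conditions of the program, which by the computation already carried out in Section \ref{sec_se} reduces to the M-eigenvalue system \eqref{eq_meig} with eigenvalue $\lambda = \lambda_{\max}(\mathscr{B})$; in particular $\lambda_{\max}(\mathscr{B})$ is an M-eigenvalue of $\mathscr{B}$.

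Next I would exploit the crucial sign inequality made available by $b_{ijkl}\ge 0$: for every $({\bf x},{\bf y})\in \mathbb{R}^n\times\mathbb{R}^n$,
\[
\mathscr{B}{\bf x}^2{\bf y}^2 = \sum_{i,j,k,l} b_{ijkl}\, x_i x_j y_k y_l \;\le\; \sum_{i,j,k,l} b_{ijkl}\, |x_i||x_j||y_k||y_l| = \mathscr{B}|{\bf x}|^2|{\bf y}|^2.
\]
Componentwise absolute values preserve the Euclidean norm, so $(|{\bf x}^\ast|,|{\bf y}^\ast|)$ is also feasible and achieves a value at least as large as $({\bf x}^\ast,{\bf y}^\ast)$. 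Hence the maximum in \eqref{eq_maxmin} is in fact attained at a pair of nonnegative vectors; feeding this nonnegative maximizer back into the KKT system \eqref{eq_meig} produces the claimed pair of nonnegative M-eigenvectors corresponding to $\lambda_{\max}(\mathscr{B})$.

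Finally, to conclude $\rho_M(\mathscr{B}) = \lambda_{\max}(\mathscr{B})$, I would take an arbitrary M-eigenvalue $\mu$ of $\mathscr{B}$ with normalized eigenvectors $({\bf u},{\bf v})$ and estimate
\[
|\mu| \;=\; |\mathscr{B}{\bf u}^2{\bf v}^2| \;\le\; \mathscr{B}|{\bf u}|^2|{\bf v}|^2 \;\le\; \lambda_{\max}(\mathscr{B}),
\]
where the first equality uses contraction of \eqref{eq_meig} against ${\bf u}$ and ${\bf v}$, the first inequality is the sign inequality, and the second is \eqref{eq_maxmin} applied to the feasible pair $(|{\bf u}|,|{\bf v}|)$. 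Together with $\lambda_{\max}(\mathscr{B}) \ge 0$ (evaluate the form at any positive unit vector), this forces $\rho_M(\mathscr{B}) = \lambda_{\max}(\mathscr{B})$. The argument is essentially the matrix Perron--Frobenius proof transplanted to the M-eigenvalue setting, so the only delicate point I anticipate is the identification of KKT points of \eqref{eq_min} with M-eigenpairs, which is handled cleanly by the Lagrangian calculation already present in Section \ref{sec_se}.
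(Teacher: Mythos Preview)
Your proof is correct and follows essentially the same route as the paper's: both arguments hinge on the variational characterization \eqref{eq_maxmin} and the entrywise inequality $\mathscr{B}{\bf x}^2{\bf y}^2 \le \mathscr{B}|{\bf x}|^2|{\bf y}|^2$ to bound $|\mu|\le\lambda_{\max}(\mathscr{B})$ and to show that the maximizer can be taken nonnegative, hence yields a nonnegative M-eigenpair via the KKT system \eqref{eq_meig}. The only difference is cosmetic---you spell out the compactness/KKT justification and treat an arbitrary M-eigenvalue $\mu$, whereas the paper works directly with $\lambda_{\min}$ and takes the variational formula as given.
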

\begin{proof}
  It enough to show that $\lambda_{\max}(\mathscr{B}) \geq \big| \lambda_{\min}(\mathscr{B}) \big|$ for proving the first statement.
  For convenience, denote $\lambda_1$ and $\lambda_2$ as the maximal and the minimal M-eigenvalues of $\mathscr{B}$ respectively, and $({\bf x}_1,{\bf y}_1)$ and $({\bf x}_2,{\bf y}_2)$ are the corresponding M-eigenvectors.
  By \eqref{eq_maxmin}, we know that $\lambda_1 = \mathscr{B} {\bf x}_1^2 {\bf y}_1^2$ and $\lambda_2 = \mathscr{B} {\bf x}_2^2 {\bf y}_2^2$.
  Then employing the nonnegativity of the entries of $\mathscr{B}$, we have
  $$
  |\lambda_2| = \big| \mathscr{B} {\bf x}_2^2 {\bf y}_2^2 \big|
  \leq \mathscr{B} |{\bf x}_2|^2 |{\bf y}_2|^2 \leq \lambda_1.
  $$

  Next, we consider the eigenvectors of the M-spectral radius.
  Assume that ${\bf x}_1$ or ${\bf y}_1$ is not a nonnegative vector.
  Then we also have
  $$
  \lambda_1 = \mathscr{B} {\bf x}_1^2 {\bf y}_1^2
  \leq \mathscr{B} |{\bf x}_1|^2 |{\bf y}_1|^2 \leq \lambda_1,
  $$
  thus $\mathscr{B} |{\bf x}_1|^2 |{\bf y}_1|^2 = \lambda_1$.
  Therefore, $\big( |{\bf x}_1|,|{\bf y}_1| \big)$ is also a pair of M-eigenvectors corresponding to $\lambda_1$, which is nonnegative.
\end{proof}

Theorem \ref{thm_nn} can be regarded as the weak Perron-Frobenius theorem for the tensors in $\mathbb{E}_{4,n}$.
Combining Theorem \ref{thm_nn} and \eqref{eq_maxmin}, we have the following corollary, which shrinks the feasible domain in \eqref{eq_maxmin}.

\begin{corollary}\label{coro nn}
  Let $\mathscr{B} \in \mathbb{E}_{4,n}$. If $\mathscr{B}$ is nonnegative, then
  $$
  \rho(\mathscr{B}) = \max \big\{ \mathscr{B} {\bf x}^2 {\bf y}^2:\, {\bf x},{\bf y} \in \mathbb{R}_+^n,\, {\bf x}^{\top}{\bf x}={\bf y}^{\top}{\bf y}=1 \big\}.
  $$
\end{corollary}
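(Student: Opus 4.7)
The plan is to obtain the corollary as a direct consequence of Theorem \ref{thm_nn} combined with the variational characterization \eqref{eq_maxmin}, by showing that the constrained maximum over $\mathbb{R}_+^n \times \mathbb{R}_+^n$ and the unconstrained maximum over the full unit spheres agree when $\mathscr{B}$ is nonnegative. Write $M$ for the right-hand side of the claimed identity; the task reduces to the two inequalities $M \le \rho_M(\mathscr{B})$ and $M \ge \rho_M(\mathscr{B})$.

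First I would dispatch the easy direction $M \le \rho_M(\mathscr{B})$. The feasible set $\{({\bf x},{\bf y}) \in \mathbb{R}_+^n \times \mathbb{R}_+^n : {\bf x}^\top{\bf x} = {\bf y}^\top{\bf y} = 1\}$ sits inside the feasible set appearing in the formula for $\lambda_{\max}(\mathscr{B})$ in \eqref{eq_maxmin}, so restricting the maximum cannot increase it, giving $M \le \lambda_{\max}(\mathscr{B})$. By the first assertion of Theorem \ref{thm_nn}, nonnegativity of $\mathscr{B}$ forces $\lambda_{\max}(\mathscr{B}) = \rho_M(\mathscr{B})$, which finishes this half.

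For the matching lower bound $M \ge \rho_M(\mathscr{B})$ I would appeal to the second assertion of Theorem \ref{thm_nn}: there exists a pair of nonnegative M-eigenvectors $({\bf x}_\ast, {\bf y}_\ast)$ corresponding to $\lambda_{\max}(\mathscr{B})$. By construction M-eigenvectors are KKT points of \eqref{eq_min}, so they automatically satisfy ${\bf x}_\ast^\top {\bf x}_\ast = {\bf y}_\ast^\top {\bf y}_\ast = 1$, and the identity $\mathscr{B} {\bf x}_\ast^2 {\bf y}_\ast^2 = \lambda_{\max}(\mathscr{B}) = \rho_M(\mathscr{B})$ holds. Since $({\bf x}_\ast, {\bf y}_\ast)$ is feasible for the maximization defining $M$, we conclude $M \ge \rho_M(\mathscr{B})$, and combining the two inequalities yields the claim.

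There is no real obstacle here once Theorem \ref{thm_nn} is in hand; the corollary is essentially a repackaging that merely replaces $\mathbb{R}^n$ by $\mathbb{R}_+^n$ in the variational formula. The only point that deserves a brief verification is that the nonnegative eigenvector pair provided by Theorem \ref{thm_nn} is genuinely unit-normalized, which is immediate from the KKT constraints \eqref{eq_meig} built into the definition of an M-eigenvector.
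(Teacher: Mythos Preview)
Your argument is correct and matches the paper's approach: the corollary is stated there without a written proof, merely as an immediate consequence of Theorem \ref{thm_nn} together with \eqref{eq_maxmin}, which is exactly the combination you spell out. Your two-inequality verification is the natural unpacking of that remark.
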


\begin{corollary}\label{coro 3.3}
  Let $\mathscr{B} \in \mathbb{E}_{4,n}$  be nonnegative. Then we have  $\rho(\mathscr{B}) = 0$ if and only if $\mathscr{B}=\mathscr{O}$, where $\mathscr{O}$ is a zero tensor.
\end{corollary}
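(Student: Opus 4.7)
The statement is a simple biconditional about a nonnegative elasticity tensor, and the plan is to dispatch the two directions separately, leaning almost entirely on Corollary \ref{coro nn} for the nontrivial direction.

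For the easy direction, if $\mathscr{B}=\mathscr{O}$ then clearly $\mathscr{B}{\bf x}^2{\bf y}^2 = 0$ for every pair of vectors, so by the variational formulas in \eqref{eq_maxmin} every M-eigenvalue is zero and hence $\rho(\mathscr{B})=0$. No real content here; one sentence should suffice.

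For the converse, I would argue as follows. Assume $\mathscr{B}\geq \mathscr{O}$ entrywise and $\rho(\mathscr{B})=0$. By Corollary \ref{coro nn}, the M-spectral radius equals the maximum of $\mathscr{B}{\bf x}^2{\bf y}^2$ over unit nonnegative ${\bf x},{\bf y}\in\mathbb{R}_+^n$. Therefore $\mathscr{B}{\bf x}^2{\bf y}^2 \leq 0$ for every nonnegative unit pair; combined with the obvious lower bound $\mathscr{B}{\bf x}^2{\bf y}^2\geq 0$ that comes from the nonnegativity of every entry $b_{ijkl}$ and every coordinate of ${\bf x},{\bf y}$, we conclude $\mathscr{B}{\bf x}^2{\bf y}^2 = 0$ for all nonnegative ${\bf x},{\bf y}$ (after removing the unit-norm normalization by homogeneity).

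The final step is to extract the entries. Plug in the strictly positive vector ${\bf x}={\bf y}=(1,1,\dots,1)^\top$: the resulting sum
$$
0 = \mathscr{B}{\bf x}^2{\bf y}^2 = \sum_{i,j,k,l=1}^n b_{ijkl}
$$
is a sum of nonnegative reals, so every $b_{ijkl}=0$ and $\mathscr{B}=\mathscr{O}$. I don't anticipate any real obstacle: the only point that deserves a line of justification is why Corollary \ref{coro nn} lets us restrict attention to nonnegative test vectors, since the corresponding statement over all of $\mathbb{R}^n$ would not immediately force every entry to vanish without a separate sign argument.
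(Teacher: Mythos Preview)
Your argument is correct, and in fact cleaner than the one in the paper, but the route is genuinely different.

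The paper does not invoke Corollary~\ref{coro nn} at all. Instead, from $\rho_M(\mathscr{B})=0$ it concludes via \eqref{eq_maxmin} that $\lambda_{\max}(\mathscr{B})=\lambda_{\min}(\mathscr{B})=0$, hence $\mathscr{B}{\bf x}^2{\bf y}^2=0$ for \emph{all} real unit vectors ${\bf x},{\bf y}$. Since $\mathscr{B}{\bf x}^2$ is symmetric and ${\bf y}^\top(\mathscr{B}{\bf x}^2){\bf y}=0$ for every ${\bf y}$, the matrix $\mathscr{B}{\bf x}^2$ vanishes identically. The entries are then recovered by a polarization step: taking ${\bf x}={\bf e}_i$ gives $b_{iikl}=0$, and taking ${\bf x}={\bf e}_i+{\bf e}_j$ gives $b_{iikl}+2b_{ijkl}+b_{jjkl}=0$, forcing $b_{ijkl}=0$. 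Notice this argument never uses the nonnegativity of $\mathscr{B}$, so it actually proves the stronger fact that $\rho_M(\mathscr{B})=0\Rightarrow\mathscr{B}=\mathscr{O}$ for every $\mathscr{B}\in\mathbb{E}_{4,n}$.

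Your approach, by contrast, exploits nonnegativity twice (once to restrict to nonnegative test vectors via Corollary~\ref{coro nn}, and once to deduce that a vanishing sum of nonnegative terms has all terms zero), which lets you finish in a single substitution ${\bf x}={\bf y}={\bf 1}$ without any polarization. It is shorter and more in the spirit of a corollary to the Perron--Frobenius result just proved; the paper's version trades brevity for a conclusion that does not rely on the sign hypothesis.
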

\begin{proof}
On the one hand, if $\mathscr{B}$ is a zero tensor, then $\rho(\mathscr{B}) = 0$.

On the other hand, if $\rho(\mathscr{B}) = 0$, then we have $ \mathscr{B} {\bf x}^2 {\bf y}^2 = 0,$ for any ${\bf x},{\bf y}\in \mathbb{R}^n$. This means that, for any ${\bf y}\in \mathbb{R}^n$, we can get $ {\bf y}^{\top} (\mathscr{B} {\bf x}^2 ){\bf y}= 0.$ Hence, for any ${\bf x}\in \mathbb{R}^n$ and fixed $k,l \in \{1,2,\cdots,n\}$, $\mathscr{B} {\bf x}^2 = \sum_{i,j=1}^{n} b_{ijkl}x_{i}x_{j} = {\bf O}$, where ${\bf O}$ is a zero matrix.

 Assume that ${\bf x} = {\bf e}_i$ ($i=1,\cdots,n$), where ${\bf e}_i$ is the unit vectors whose $i$-th component is 1 and others are zero, we have $b_{iikl}=0$ for all fixed $k,l \in \{1,2,\cdots,n\}$.
 Furthermore, when ${\bf x} = {\bf e}_i +  {\bf e}_j$ ($i\neq j$, and $i,j=1,\cdots,n$), we have $b_{iikl}+ 2b_{ijkl}+b_{jjkl}=0$ for all fixed $k,l \in \{1,2,\cdots,n\}$. Thus we have  $b_{ijkl}=0$ for all fixed $k,l \in \{1,2,\cdots,n\}$. This means that $\mathscr{B}=\mathscr{O}$ when $\rho(\mathscr{B}) = 0$.

\end{proof}

Chang, Qi, and Zhou \cite{ChangQiZhou10} also studied the strong ellipticity for nonnegative elasticity tensors. They introduced the singular values of a tensor $\mathscr{B} \in \mathbb{E}_{4,n}$ as
$$
\left\{
\begin{array}{l}
  \mathscr{B} {\bf x} {\bf y}^2 = \sigma {\bf x}^{[3]}, \\
  \mathscr{B} {\bf x}^2 {\bf y} = \sigma {\bf y}^{[3]},
\end{array}
\right.
$$
and they also investigate the Perron-Frobenius theorem for the singular values.
Nevertheless, it is hard to find an identity tensor similar to the tensor $\mathscr{E}$ in our case, thus we may not be able to define a kind of $\mathscr{M}$-tensors with respect to their singular values. However, they introduced the definition for irreducibility of the elasticity tensors in \cite{ChangQiZhou10}. Recall the notations of  ${\bf A}_x^{(k,l)}$ ($k,l = 1,\cdots,n$) and ${\bf A}_y^{(i,j)}$ ($i,j = 1,\cdots,n$) in Section \ref{sec_se}.  Let $\mathscr{B} \in \mathbb{E}_{4,n}$ be nonnegative. If all the $n \times n$ matrices ${\bf B}_x^{(k,k)}$ ($k = 1,\cdots,n$) and ${\bf B}_y^{(i,i)}$ ($i = 1,\cdots,n$) are all irreducible matrices, then the nonnegative elasticity tensor $\mathscr{B}$ is called {\bf irreducible}\cite{ChangQiZhou10}. With the irreducibility of an elasticity nonnegative tensor, a useful lemma can be proved.

\begin{lemma}\label{lem1216}
Let $\mathscr{B} \in \mathbb{E}_{4,n}$. If $\mathscr{B}$ is nonnegative and irreducible, then there is a pair of positive M-eigenvectors corresponding to its M-spectral radius, i.e.,
  $$
  \rho(\mathscr{B}) = \max \big\{ \mathscr{B} {\bf x}^2 {\bf y}^2:\, {\bf x},{\bf y} \in \mathbb{R}_{++}^n,\, {\bf x}^{\top}{\bf x}={\bf y}^{\top}{\bf y}=1 \big\},
  $$
  where $\mathbb{R}_{++}^n$ is positive real vector field with dimension $n$.
\end{lemma}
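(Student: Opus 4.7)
The plan is to start from the nonnegative M-eigenvector pair $({\bf x},{\bf y})$ guaranteed by Theorem \ref{thm_nn}, which attains $\rho(\mathscr{B})$, and then upgrade nonnegativity to strict positivity by exploiting the irreducibility of the matrices ${\bf B}_x^{(k,k)}$ and ${\bf B}_y^{(i,i)}$. Since $\mathscr{B}$ is nonzero (otherwise irreducibility would fail), Corollary \ref{coro 3.3} gives $\rho(\mathscr{B}) > 0$, so both ${\bf x}$ and ${\bf y}$ are nonzero nonnegative vectors.

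The core argument is a contradiction. Suppose ${\bf x}$ has a zero coordinate, and set $S = \{i : x_i = 0\}$ so that both $S$ and its complement $S^c$ are nonempty. From the M-eigenvalue relation $\mathscr{B}{\bf x}{\bf y}^2 = \rho\,{\bf x}$, looking at the $i$-th coordinate for $i \in S$ gives $\sum_{j,k,l} b_{ijkl}\,x_j y_k y_l = 0$, and because every summand is nonnegative each one must vanish individually. Fixing any $k_0$ with $y_{k_0}>0$ and setting $k=l=k_0$, this forces $b_{ij k_0 k_0} = 0$ whenever $i\in S$ and $j\in S^c$. But the matrix ${\bf B}_x^{(k_0,k_0)}$ has entries $b_{ij k_0 k_0}$ and is symmetric by the elasticity symmetry \eqref{eq_sym}, so both off-diagonal blocks corresponding to the partition $S \cup S^c$ are zero, contradicting its irreducibility. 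Hence ${\bf x}>{\bf 0}$.

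An entirely symmetric argument, using $\mathscr{B}{\bf x}^2{\bf y} = \rho\,{\bf y}$ together with the irreducibility of ${\bf B}_y^{(i_0,i_0)}$ for an index $i_0$ with $x_{i_0}>0$, shows that ${\bf y}>{\bf 0}$. The variational reformulation in the statement then follows immediately from Corollary \ref{coro nn}, since a positive maximizing pair lies in the restricted feasible set $\mathbb{R}_{++}^n \times \mathbb{R}_{++}^n$.

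The only delicate step is step~(2)–the passage from a vanishing coordinate of ${\bf x}$ to a zero off-diagonal block in some ${\bf B}_x^{(k_0,k_0)}$. What makes it work cleanly is the combination of three ingredients: the termwise nonnegativity that lets us split the sum, the availability of a strictly positive coordinate $y_{k_0}$ that we can plug in twice, and the symmetry $b_{ijkl}=b_{jikl}$ that converts a unilateral block of zeros into a full reducibility obstruction. If any of these were absent (e.g.\ for a general nonnegative tensor without the symmetry \eqref{eq_sym}) one would need a stronger or differently phrased irreducibility notion.
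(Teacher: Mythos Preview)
Your argument is correct, but it proceeds along a different path from the paper's. The paper first bounds the matrix $\mathscr{B}{\bf x}^2$ below by ${\bf B}_y^{(i_0,i_0)} x_{i_0}^2$ (for some $i_0$ with $x_{i_0}>0$), infers that $\mathscr{B}{\bf x}^2$ is an irreducible nonnegative matrix, and then invokes the classical matrix Perron--Frobenius theorem to conclude that the nonnegative eigenvector ${\bf y}$ of $\mathscr{B}{\bf x}^2$ must in fact be strictly positive; the roles of ${\bf x}$ and ${\bf y}$ are then swapped. You instead argue by direct contradiction: a zero pattern in ${\bf x}$, combined with the eigenvalue relation $\mathscr{B}{\bf x}{\bf y}^2=\rho\,{\bf x}$ and termwise nonnegativity, forces an entire off-diagonal block of some ${\bf B}_x^{(k_0,k_0)}$ to vanish, violating its irreducibility. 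Your route is more elementary in that it avoids quoting the Perron--Frobenius theorem for matrices and uses only the definition of reducibility; the paper's route is slightly more conceptual, identifying $\mathscr{B}{\bf x}^2$ itself as an irreducible matrix, a fact of independent interest. Note also that your appeal to Corollary~\ref{coro 3.3} to obtain $\rho(\mathscr{B})>0$ is harmless but unnecessary: the M-eigenvectors are unit vectors by definition \eqref{eq_meig}, so they are automatically nonzero, and your contradiction argument nowhere uses the strict positivity of $\rho$.
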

\begin{proof}
  Since  $\mathscr{B}$ is nonnegative, from Theorem \ref{thm_nn}, there exists a pair of nonnegative M-eigenvectors ${\bf x}, {\bf y}\in \mathbb{R}^n_{+}$ corresponding to its M-spectral radius $\rho(\mathscr{B})$. Moreover, we have
  $$\mathscr{B} {\bf x}^2 = \sum_{i,j=1}^{n} {\bf B}_y^{(i,j)}x_{i}x_{j} \geq  \sum_{i=1}^{n} {\bf B}_y^{(i,i)}x_{i}x_{i}.$$
  Due to the nonnegativity of ${\bf x}$, there exists an $i_{0}\in \{1,\cdots,n\}$ such that $x_{i_{0}} > 0$. Hence,
  $$\mathscr{B} {\bf x}^2  \geq  \sum_{i=1}^{n} {\bf B}_y^{(i,i)}x_{i}x_{i} \geq {\bf B}_y^{(i_{0},i_{0})}x_{i_{0}}x_{i_{0}}.$$
  Since $\mathscr{B}$ is irreducible,  ${\bf B}_y^{(i,i)}$ ($i \in \{1,\cdots,n\}$) are irreducible matrices. Thus, $\mathscr{B} {\bf x}^2$ is also irreducible.
  Hence, the corresponding M-eigenvector ${\bf y}$ is positive such that $$\mathscr{B} {\bf x}^2{\bf y} = \rho(\mathscr{B}){\bf y}.$$ Similarity,  $\mathscr{B} {\bf y}^2$ is irreducible, and the corresponding M-eigenvector ${\bf x}$ is positive such that $\mathscr{B} {\bf x}{\bf y}^2 = \rho(\mathscr{B}){\bf x}.$ In the summery, we have $$
  \rho(\mathscr{B}) = \max \big\{ \mathscr{B} {\bf x}^2 {\bf y}^2:\, {\bf x},{\bf y} \in \mathbb{R}_{++}^n,\, {\bf x}^{\top}{\bf x}={\bf y}^{\top}{\bf y}=1 \big\}.
  $$
\end{proof}

However, when $\mathscr{B}$ is nonnegative and irreducible,  not all positive M-eigenvectors are  corresponding to its M-spectral radius, such as following example:
\begin{example}
 Let $\mathscr{B} \in \mathbb{E}_{4,2}$ be defined by $b_{1111}= 4$, $b_{1122} =  b_{2211}= 10$, $b_{2222}= 2,$ $b_{1112} = b_{1121}= b_{1211} =b_{2111} =1,$
 $b_{1212} = b_{1221}= b_{2112} =b_{2121} =1,$ and $b_{1222} = b_{2122}= b_{2212} =b_{2221} =2.$

It is a nonnegative irreducible elasticity tensor. By computing its M-eigenvalues and corresponding M-eigenvectors, we have
$\lambda_{max}=10.9075$, and the M-eigenvectors are $${\bf x}_{1}=(0.2936, 0.9560)^{\top}, {\bf y}_{1}=(0.9442, 0.3294)^{\top},$$ and $${\bf x}_{2}=(0.9442, 0.3294)^{\top}, {\bf y}_{2}=(0.2936, 0.9560)^{\top}.$$ Furthermore, the second max M-eigenvalue is $\lambda_{2nd-max}=10.5$. The corresponding M-eigenvectors are ${\bf x}=(0.7071, 0.7071)^{\top}$, ${\bf y}=(0.7071, 0.7071)^{\top}$. Hence, not all the positive M-eigenvectors are corresponding to its M-spectral radius.
\end{example}

\section{Elasticity $\mathscr{M}$-tensors}\label{sec_elasM}

Recall that the identity tensor $\mathscr{E}$ is defined by $e_{iikk} = 1$ and other entries being zero.
Let $\mathscr{A} \in \mathbb{E}_{4,n}$.
Accordingly, we call the entries $a_{iikk}$ ($i,k = 1,2,\dots,n$) diagonal, and other entries are called off-diagonal.
Obviously, the diagonal entries of an M-PD tensor must be positive, and the ones of an M-PSD tensor must be nonnegative.
It is worth noting that the diagonal entries of an elasticity tensor also lie on the diagonal of its unfolding matrix.

A tensor in $\mathbb{E}_{4,n}$ is called an elasticity $\mathscr{Z}$-tensor if all its off-diagonal entries are non-positive.
If $\mathscr{A} \in \mathbb{E}_{4,n}$ is an elasticity $\mathscr{Z}$-tensor, then we can always write it as $\mathscr{A} = s \mathscr{E} - \mathscr{B}$, where $\mathscr{B}$ is a nonnegative tensor in $\mathbb{E}_{4,n}$.
Such partition of an elasticity $\mathscr{E}$-tensor is not unique.
If a tensor $\mathscr{A} \in \mathbb{E}_{4,n}$ can be written as $\mathscr{A} = s \mathscr{E} - \mathscr{B}$ satisfying that $\mathscr{B} \in \mathbb{E}_{4,n}$ is nonnegative and $s \geq \rho_M(\mathscr{B})$, then we call $\mathscr{A}$ an {\bf elasticity $\mathscr{M}$-tensor}. Furthermore, if $s > \rho_M(\mathscr{B})$, then we call $\mathscr{A}$ a {\bf nonsingular elasticity $\mathscr{M}$-tensor}.

\begin{theorem}\label{thm_mindiag}
  Let $\mathscr{A} \in \mathbb{E}_{4,n}$ be an elasticity $\mathscr{Z}$-tensor. Then $\mathscr{A}$ is a nonsingular elasticity $\mathscr{M}$-tensor if and only if $\alpha > \rho_M (\alpha \mathscr{E} - \mathscr{A})$, where $\alpha = \max \big\{ a_{iikk}:\, i,k=1,2,\dots,n \big\}$.
\end{theorem}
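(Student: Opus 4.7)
The plan is to prove both directions by constructing or analyzing a specific decomposition $\mathscr{A} = s\mathscr{E} - \mathscr{B}$ and comparing it against the distinguished decomposition with $s = \alpha$.

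For the easy ($\Leftarrow$) direction, I would set $\mathscr{B} := \alpha\mathscr{E} - \mathscr{A}$. Since $\mathscr{A}$ is an elasticity $\mathscr{Z}$-tensor, every off-diagonal entry of $\mathscr{B}$ equals $-a_{ijkl} \geq 0$. The diagonal entries of $\mathscr{B}$ are $\alpha - a_{iikk} \geq 0$ by the very definition of $\alpha$. So $\mathscr{B} \in \mathbb{E}_{4,n}$ is nonnegative, and the decomposition $\mathscr{A} = \alpha \mathscr{E} - \mathscr{B}$ with the hypothesis $\alpha > \rho_M(\mathscr{B})$ is exactly what is required for $\mathscr{A}$ to be a nonsingular elasticity $\mathscr{M}$-tensor.

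For the ($\Rightarrow$) direction, suppose $\mathscr{A} = s\mathscr{E} - \mathscr{B}$ is any witnessing decomposition, with $\mathscr{B}$ nonnegative and $s > \rho_M(\mathscr{B})$. The key algebraic identity is
\[
\alpha\mathscr{E} - \mathscr{A} = (\alpha - s)\mathscr{E} + \mathscr{B},
\]
and from comparing diagonal entries one sees $\alpha = s - \min_{i,k} b_{iikk}$, so $\alpha - s \leq 0$. By the argument of the first paragraph, $\alpha\mathscr{E} - \mathscr{A}$ is itself a nonnegative tensor in $\mathbb{E}_{4,n}$, so Theorem \ref{thm_nn} applies to both $\mathscr{B}$ and $\alpha\mathscr{E} - \mathscr{A}$, turning M-spectral radii into greatest M-eigenvalues. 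Now apply Proposition \ref{Pro2.2} to the pair $\mathscr{B}$ and $(\alpha-s)\mathscr{E} + \mathscr{B}$ (taking the scalars $1$ and $\alpha - s$): the M-eigenvalues of $\alpha\mathscr{E} - \mathscr{A}$ are exactly the shifts $\lambda + (\alpha - s)$ of the M-eigenvalues $\lambda$ of $\mathscr{B}$. Taking maxima on both sides yields
\[
\rho_M(\alpha\mathscr{E} - \mathscr{A}) = \rho_M(\mathscr{B}) + (\alpha - s) < s + (\alpha - s) = \alpha,
\]
which is the desired inequality.

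The main thing to be careful about is that Proposition \ref{Pro2.2} is stated for M-eigenvalues, not directly for M-spectral radii, so the correspondence $\mu = \lambda + (\alpha - s)$ is only a bijection of M-eigenvalues; converting that to a statement about spectral radii requires the Perron-Frobenius-type fact (Theorem \ref{thm_nn}) that the spectral radius of a nonnegative tensor in $\mathbb{E}_{4,n}$ coincides with its maximum M-eigenvalue. That is why it is crucial to first verify nonnegativity of $\alpha\mathscr{E} - \mathscr{A}$; otherwise one would have to worry about the spectral radius being attained by a negative eigenvalue of $\alpha\mathscr{E} - \mathscr{A}$, which would break the simple linear translation above.
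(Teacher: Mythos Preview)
Your proposal is correct and follows essentially the same route as the paper's proof: both directions use the decomposition $\mathscr{A} = \alpha\mathscr{E} - (\alpha\mathscr{E}-\mathscr{A})$ for the ``if'' part, and for the ``only if'' part relate an arbitrary witnessing decomposition $\mathscr{A} = s\mathscr{E}-\mathscr{B}$ to the canonical one via $\alpha\mathscr{E}-\mathscr{A} = \mathscr{B} - \beta\mathscr{E}$ with $\beta = \min_{i,k} b_{iikk}$, then shift the spectral radius. Your write-up is in fact more careful than the paper's, which simply asserts $\rho_M(\alpha\mathscr{E}-\mathscr{A}) = \rho_M(\mathscr{B}) - \beta$; you correctly identify that this equality rests on Proposition~\ref{Pro2.2} together with Theorem~\ref{thm_nn} (so that spectral radius equals largest M-eigenvalue for both nonnegative tensors involved).
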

\begin{proof}
  The ``if'' part is obvious by the partition $\mathscr{A} = \alpha\mathscr{E} - (\alpha \mathscr{E} - \mathscr{A})$.
  Thus we focus on the ``only if'' part.
  If $\mathscr{A}$ is a nonsingular elasticity $\mathscr{M}$-tensor, then it can be written as $\mathscr{A} = s \mathscr{E} - \mathscr{B}$ satisfying that $\mathscr{B} \in \mathbb{E}_{4,n}$ is nonnegative and $s > \rho_M(\mathscr{B})$.
  Denote $\beta = \min \big\{ b_{iikk}:\, i,k=1,2,\dots,n \big\}$, then $\alpha = s-\beta$.
  Moreover, we can also write $\alpha \mathscr{E} - \mathscr{A} = \mathscr{B} - \beta \mathscr{E}$, thus $\rho_M(\alpha \mathscr{E} - \mathscr{A}) = \rho_M(\mathscr{B}) - \beta$.
  Therefore, $s > \rho_M(\mathscr{B})$ implies that $\alpha > \rho_M (\alpha \mathscr{E} - \mathscr{A})$.
\end{proof}

The above theorem is a simple but useful observation.
We can utilize this theorem to prove the following proposition, which reveals that any elasticity $\mathscr{M}$-tensor is the limit of a series of nonsingular elasticity $\mathscr{M}$-tensors.
Hence, we may omit the proofs of following results for general elasticity $\mathscr{M}$-tensors, since it can be verified by taking limits of the results for nonsingular elasticity $\mathscr{M}$-tensors.

\begin{proposition}
 $\mathscr{A} \in \mathbb{E}_{4,n}$ is an elasticity $\mathscr{M}$-tensor if and only if $\mathscr{A} + t\mathscr{E}$ is a nonsingular elasticity $\mathscr{M}$-tensor for any $t > 0$.
\end{proposition}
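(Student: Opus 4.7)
The plan is to treat the two directions separately, with the forward direction being essentially a one-line computation and the reverse direction leaning on Theorem \ref{thm_mindiag} together with a limiting argument.

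For the ``only if'' direction, I would suppose that $\mathscr{A}$ is an elasticity $\mathscr{M}$-tensor, so that $\mathscr{A} = s\mathscr{E} - \mathscr{B}$ for some nonnegative $\mathscr{B} \in \mathbb{E}_{4,n}$ with $s \geq \rho_M(\mathscr{B})$. Then for every $t > 0$,
$$
\mathscr{A} + t\mathscr{E} = (s+t)\mathscr{E} - \mathscr{B},
$$
and $s + t > s \geq \rho_M(\mathscr{B})$ immediately gives that $\mathscr{A} + t\mathscr{E}$ is a nonsingular elasticity $\mathscr{M}$-tensor.

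For the ``if'' direction, I would first observe that since $\mathscr{E}$ has only diagonal entries, the off-diagonal entries of $\mathscr{A} + t\mathscr{E}$ coincide with those of $\mathscr{A}$. Any elasticity $\mathscr{M}$-tensor is an elasticity $\mathscr{Z}$-tensor (off-diagonals are $-b_{ijkl} \leq 0$), hence $\mathscr{A}$ itself is an elasticity $\mathscr{Z}$-tensor. Next, set $\alpha = \max\{a_{iikk} : i,k = 1,\dots,n\}$, so that $\alpha + t = \max\{(\mathscr{A} + t\mathscr{E})_{iikk}\}$. Applying Theorem \ref{thm_mindiag} to the nonsingular elasticity $\mathscr{M}$-tensor $\mathscr{A} + t\mathscr{E}$ yields
$$
\alpha + t > \rho_M\bigl((\alpha + t)\mathscr{E} - (\mathscr{A} + t\mathscr{E})\bigr) = \rho_M(\alpha\mathscr{E} - \mathscr{A}),
$$
where the key simplification is that the nonnegative part $\alpha\mathscr{E} - \mathscr{A}$ does not depend on $t$.

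Finally, since this holds for every $t > 0$, letting $t \to 0^+$ gives $\alpha \geq \rho_M(\alpha\mathscr{E} - \mathscr{A})$. Combining this with the fact that $\alpha\mathscr{E} - \mathscr{A}$ is nonnegative (its off-diagonals are $-a_{ijkl} \geq 0$ by the $\mathscr{Z}$-property, its diagonals are $\alpha - a_{iikk} \geq 0$ by choice of $\alpha$) and lies in $\mathbb{E}_{4,n}$ (linearity of the symmetry constraints), the decomposition $\mathscr{A} = \alpha\mathscr{E} - (\alpha\mathscr{E} - \mathscr{A})$ certifies $\mathscr{A}$ as an elasticity $\mathscr{M}$-tensor. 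There is no real obstacle here; the only subtle point is noticing that shifting $\mathscr{A}$ by $t\mathscr{E}$ also shifts $\alpha$ by exactly $t$, so the $t$'s cancel and the limit argument produces the weak inequality needed for the $\mathscr{M}$-tensor (rather than nonsingular) condition.
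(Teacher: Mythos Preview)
Your proof is correct and follows essentially the same approach as the paper: the forward direction via the decomposition $\mathscr{A} + t\mathscr{E} = (s+t)\mathscr{E} - \mathscr{B}$, and the reverse direction by applying Theorem~\ref{thm_mindiag} to $\mathscr{A} + t\mathscr{E}$, observing that the nonnegative part $\alpha\mathscr{E} - \mathscr{A}$ is independent of $t$, and letting $t \to 0^+$. You include a few extra verifications (that $\mathscr{A}$ is a $\mathscr{Z}$-tensor, and that $\alpha\mathscr{E} - \mathscr{A}$ is nonnegative and in $\mathbb{E}_{4,n}$) which the paper leaves implicit, but the argument is otherwise identical.
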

\begin{proof}
  Since $\mathscr{A}$ is an elasticity $\mathscr{M}$-tensor, there exists a nonnegative elasticity tensor $\mathscr{B}$ and $s \geq \rho_M(\mathscr{B})$ such that $\mathscr{A} = s\mathscr{E} - \mathscr{B}$.
  Then for any $t>0$, we have $\mathscr{A}+t\mathscr{E}=(s+t)\mathscr{E}-\mathscr{B}$.
  Clearly, $s+t>\rho(\mathscr{B})$, which implies that $\mathscr{A} + t\mathscr{E}$ is a nonsingular elasticity $\mathscr{M}$-tensor.

  Conversely, if $\mathscr{A} + t\mathscr{E}$ is a nonsingular elasticity $\mathscr{M}$-tensor for any $t > 0$, then by the previous theorem we have $\alpha_t > \rho_M \big(\alpha_t \mathscr{E} - (\mathscr{A} + t\mathscr{E})\big)$, where $\alpha_t$ is the greatest diagonal entry of $\mathscr{A} + t \mathscr{E}$.
  Denote $\alpha$ as the greatest diagonal entry of $\mathscr{A}$.
  Then $\alpha_t = \alpha + t$, thus $\alpha + t > \rho_M (\alpha \mathscr{E} - \mathscr{A})$ for any $t>0$.
  When $t$ approaches $0$, it can be concluded that $\alpha \geq \rho_M (\alpha \mathscr{E} - \mathscr{A})$, which implies that $\mathscr{A}$ is an elasticity $\mathscr{M}$-tensor.
\end{proof}

It is well-known that a symmetric nonsingular ${\bf M}$-matrix is positive definite \cite{BermanPlemmons94}.
The same statement was also proved for symmetric nonsingular $\mathscr{M}$-tensors in \cite{ZhangQiZhou14}.
Moreover, we shall show that a nonsingular elasticity $\mathscr{M}$-tensor is M-positive definite thus satisfies the strong ellipticity condition.
In this spirit, we find a class of structured tensors that satisfies the strong ellipticity condition.

\begin{theorem}\label{thm_pd}
  Let $\mathscr{A} \in \mathbb{E}_{4,n}$ be an elasticity $\mathscr{Z}$-tensor. Then $\mathscr{A}$ is a nonsingular elasticity $\mathscr{M}$-tensor if and only if $\mathscr{A}$ is M-positive definite; and $\mathscr{A}$ is an elasticity $\mathscr{M}$-tensor if and only if $\mathscr{A}$ is M-positive semidefinite.
\end{theorem}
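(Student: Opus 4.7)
The plan is to reduce everything to the link between the M-eigenvalues of an elasticity $\mathscr{Z}$-tensor and those of its ``nonnegative part,'' using the previously established results. First I would fix an elasticity $\mathscr{Z}$-tensor $\mathscr{A} \in \mathbb{E}_{4,n}$ and write $\mathscr{A} = s\mathscr{E} - \mathscr{B}$ with $\mathscr{B} \in \mathbb{E}_{4,n}$ nonnegative; this is legal because off-diagonal entries of $\mathscr{A}$ are non-positive, and the off-diagonal entries of $\mathscr{B} = s\mathscr{E} - \mathscr{A}$ are unaffected by the $s\mathscr{E}$ shift. Applying Proposition \ref{Pro2.2} with $\alpha = -1$ and $\beta = -s$ gives the bijection $\mu = s - \lambda$ between M-eigenvalues $\lambda$ of $\mathscr{B}$ and M-eigenvalues $\mu$ of $\mathscr{A}$, preserving M-eigenvectors.

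Next I would handle the nonsingular case. Since $\mathscr{B}$ is nonnegative, Theorem \ref{thm_nn} says $\rho_M(\mathscr{B}) = \lambda_{\max}(\mathscr{B})$, so the smallest M-eigenvalue of $\mathscr{A}$ is exactly $s - \rho_M(\mathscr{B})$. This makes the equivalence immediate: $s > \rho_M(\mathscr{B})$ if and only if every M-eigenvalue of $\mathscr{A}$ is positive, i.e., $\mathscr{A}$ is M-PD. For the ``only if'' direction starting from the definition of a nonsingular elasticity $\mathscr{M}$-tensor, this is the end of the argument. For the ``if'' direction I would invoke Theorem \ref{thm_mindiag}: take the canonical splitting $\mathscr{A} = \alpha\mathscr{E} - (\alpha\mathscr{E}-\mathscr{A})$ with $\alpha = \max\{a_{iikk}\}$ (positive since $\mathscr{A}$ is M-PD has positive diagonal entries), apply the same eigenvalue correspondence to conclude $\alpha > \rho_M(\alpha\mathscr{E}-\mathscr{A})$, and then Theorem \ref{thm_mindiag} gives that $\mathscr{A}$ is a nonsingular elasticity $\mathscr{M}$-tensor.

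For the second equivalence (the semidefinite version) I would pass from the nonsingular case by a perturbation trick. Note that $\mathscr{A} + t\mathscr{E}$ is still an elasticity $\mathscr{Z}$-tensor for every $t > 0$ because only diagonals are shifted. By Proposition \ref{Pro2.2} its M-eigenvalues are exactly $\{\lambda + t : \lambda \text{ M-eigenvalue of } \mathscr{A}\}$, so $\mathscr{A} + t\mathscr{E}$ is M-PD for all $t > 0$ iff every M-eigenvalue of $\mathscr{A}$ is nonnegative, i.e., $\mathscr{A}$ is M-PSD. Combining this with the nonsingular equivalence already proved, $\mathscr{A}+t\mathscr{E}$ being a nonsingular elasticity $\mathscr{M}$-tensor for every $t>0$ is equivalent to M-PSD; and by Proposition 4.2 the former is equivalent to $\mathscr{A}$ being an elasticity $\mathscr{M}$-tensor. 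This closes the second equivalence.

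The main obstacle is really just bookkeeping: keeping track of the sign flip in the affine shift when translating between ``$s$ bounds $\rho_M(\mathscr{B})$'' and ``smallest M-eigenvalue of $\mathscr{A}$ is positive,'' and making sure that the Perron-Frobenius-type identification $\rho_M(\mathscr{B}) = \lambda_{\max}(\mathscr{B})$ from Theorem \ref{thm_nn} is used with the correct sign so that $\lambda_{\min}(\mathscr{A}) = s - \rho_M(\mathscr{B})$ rather than $s$ minus some other M-eigenvalue. Once this is set up carefully, both equivalences follow without further technical work, and the semidefinite case is just a limit argument via Proposition 4.2.
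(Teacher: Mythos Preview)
Your proof is correct and follows essentially the same approach as the paper: both arguments split $\mathscr{A} = s\mathscr{E} - \mathscr{B}$ with $\mathscr{B}$ nonnegative and use the Perron--Frobenius identity $\rho_M(\mathscr{B}) = \lambda_{\max}(\mathscr{B})$ to equate $s > \rho_M(\mathscr{B})$ with $\lambda_{\min}(\mathscr{A}) > 0$. The only cosmetic differences are that the paper works directly with the variational formula \eqref{eq_maxmin} rather than invoking Proposition~\ref{Pro2.2} and Theorem~\ref{thm_nn} as black boxes, and it handles the semidefinite case by the obvious non-strict analogue rather than by your perturbation argument through $\mathscr{A} + t\mathscr{E}$; your detour through Theorem~\ref{thm_mindiag} in the ``if'' direction is harmless but unnecessary, since $\alpha > \rho_M(\alpha\mathscr{E} - \mathscr{A})$ with $\alpha\mathscr{E} - \mathscr{A}$ nonnegative is already the definition.
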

\begin{proof}
  Denote $\mathscr{A} = s \mathscr{E} - \mathscr{B}$, where $\mathscr{B}$ is nonnegative.

  If $\mathscr{A}$ is a nonsingular elasticity $\mathscr{M}$-tensor, then $s > \rho_M(\mathscr{B})$.
  By \eqref{eq_maxmin}, we have $s > \mathscr{B} {\bf x}^2 {\bf y}^2$ for all ${\bf x}^\top {\bf x} = {\bf y}^\top {\bf y} = 1$.
  Recall that $\mathscr{E} {\bf x}^2 {\bf y}^2 = ({\bf x}^\top {\bf x}) ({\bf y}^\top {\bf y})$.
  Then $s \mathscr{E} {\bf x}^2 {\bf y}^2 > \mathscr{B} {\bf x}^2 {\bf y}^2$, which is equivalent to $\mathscr{A} {\bf x}^2 {\bf y}^2 > 0$ for all ${\bf x}^\top {\bf x} = {\bf y}^\top {\bf y} = 1$.
  Therefore $\mathscr{A}$ is M-positive definite.

  On the other hand, suppose that $\mathscr{A}$ is M-positive definite, i.e., $\mathscr{A} {\bf x}^2 {\bf y}^2 > 0$ for all ${\bf x}^\top {\bf x} = {\bf y}^\top {\bf y} = 1$.
  Then similarly we have $s = s \mathscr{E} {\bf x}^2 {\bf y}^2 > \mathscr{B} {\bf x}^2 {\bf y}^2$ for all ${\bf x}^\top {\bf x} = {\bf y}^\top {\bf y} = 1$.
  We know from \eqref{eq_maxmin} that $s > \rho_M(\mathscr{B})$, i.e, $\mathscr{A}$ is a nonsingular elasticity $\mathscr{M}$-tensor.

  The result for general elasticity $\mathscr{M}$-tensors can be proved similarly.
\end{proof}

The following equivalent definitions for elasticity $\mathscr{M}$-tensors is straightforward corollary of Proposition \ref{Pro2.2}, Lemma \ref{lem1216} and Theorem \ref{thm_pd}.
\begin{corollary}\label{coro 5.4}
Let $\mathscr{A} \in \mathbb{E}_{4,n}$ be an elasticity $\mathscr{Z}$-tensor.
\begin{enumerate}
  \item $\mathscr{A}$ is an (nonsingular) elasticity $\mathscr{M}$-tensor if and only if
        $$
        \min \big\{ \mathscr{A} {\bf x}^2 {\bf y}^2:\, {\bf x},{\bf y} \in \mathbb{R}_{+}^n,\, {\bf x}^{\top}{\bf x}={\bf y}^{\top}{\bf y}=1 \big\}\geq 0(>0).
        $$
  \item Further assume that $\mathscr{A}$ is irreducible. $\mathscr{A}$ is an (nonsingular) elasticity $\mathscr{M}$-tensor if and only if
        $$
        \min \big\{ \mathscr{A} {\bf x}^2 {\bf y}^2:\, {\bf x},{\bf y} \in \mathbb{R}_{++}^n,\, {\bf x}^{\top}{\bf x}={\bf y}^{\top}{\bf y}=1 \big\}\geq 0(>0).
        $$
\end{enumerate}
\end{corollary}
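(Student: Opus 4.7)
The plan is to express, for each feasibility set $S\in\{\mathbb{R}_+^n,\mathbb{R}_{++}^n\}$,
\[
\min\{\mathscr{A}{\bf x}^2{\bf y}^2:{\bf x},{\bf y}\in S,\,{\bf x}^\top{\bf x}={\bf y}^\top{\bf y}=1\}\;=\;s-\rho_M(\mathscr{B}),
\]
for a suitable nonnegative companion tensor $\mathscr{B}$, and then read off the two equivalences from the definition of a (nonsingular) elasticity $\mathscr{M}$-tensor.

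Because $\mathscr{A}$ is an elasticity $\mathscr{Z}$-tensor, I can write $\mathscr{A}=s\mathscr{E}-\mathscr{B}$ with $\mathscr{B}\in\mathbb{E}_{4,n}$ nonnegative (for instance with $s=\max_{i,k}a_{iikk}$). By definition, $\mathscr{A}$ is an (nonsingular) elasticity $\mathscr{M}$-tensor exactly when $s\geq\rho_M(\mathscr{B})$ (respectively $s>\rho_M(\mathscr{B})$); Proposition \ref{Pro2.2} shows that this condition is independent of the particular choice of $s$. Since $\mathscr{E}{\bf x}^2{\bf y}^2=1$ on any unit pair, for any constraint set $S$ I obtain
\[
\min_{({\bf x},{\bf y})\in S}\mathscr{A}{\bf x}^2{\bf y}^2\;=\;s-\max_{({\bf x},{\bf y})\in S}\mathscr{B}{\bf x}^2{\bf y}^2,
\]
so the task reduces to evaluating these maxima of $\mathscr{B}{\bf x}^2{\bf y}^2$.

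For item 1, Corollary \ref{coro nn} gives at once $\max\{\mathscr{B}{\bf x}^2{\bf y}^2:{\bf x},{\bf y}\in\mathbb{R}_+^n,\,{\bf x}^\top{\bf x}={\bf y}^\top{\bf y}=1\}=\rho_M(\mathscr{B})$, yielding the first displayed identity and hence the equivalence. For item 2, the extra hypothesis that $\mathscr{A}$ is irreducible transfers to $\mathscr{B}$ because each diagonal block ${\bf B}_x^{(k,k)}$ (respectively ${\bf B}_y^{(i,i)}$) has the same off-diagonal zero pattern as $-{\bf A}_x^{(k,k)}$ (respectively $-{\bf A}_y^{(i,i)}$), and matrix irreducibility depends only on that pattern; Lemma \ref{lem1216} then supplies the analogous maximum characterization on $\mathbb{R}_{++}^n$, which closes the argument. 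As a sanity check, Theorem \ref{thm_pd} provides an equivalent route via $\lambda_{\min}(\mathscr{A})=s-\rho_M(\mathscr{B})$.

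Essentially, there is no real obstacle: the whole proof is the linear shift $\mathscr{A}\leftrightarrow s\mathscr{E}-\mathscr{B}$ combined with the two Perron--Frobenius-type maximum principles of Section \ref{sec_nonneg}. The only point needing care is that the supremum in item 2 is taken over the open orthant $\mathbb{R}_{++}^n$, so one must verify that it is genuinely attained and equals $\rho_M(\mathscr{B})$; this is precisely what Lemma \ref{lem1216} provides, via its guarantee of a pair of strictly positive M-eigenvectors at the M-spectral radius.
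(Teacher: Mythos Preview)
Your proof is correct and follows essentially the same route the paper intends: it states the corollary as an immediate consequence of Proposition~\ref{Pro2.2}, Lemma~\ref{lem1216}, and Theorem~\ref{thm_pd}, and your argument is precisely the shift $\mathscr{A}=s\mathscr{E}-\mathscr{B}$ together with the Perron--Frobenius-type maximum principles (Corollary~\ref{coro nn} for item~1, Lemma~\ref{lem1216} for item~2). Your explicit check that irreducibility transfers from $\mathscr{A}$ to $\mathscr{B}$ via ${\bf B}_x^{(k,k)}=sI_n-{\bf A}_x^{(k,k)}$ is a nice detail the paper leaves implicit.
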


Recall that the S-eigenvalues of a tensor in $\mathbb{E}_{4,n}$ are defined by the eigenvalues of its unfolding matrices ${\bf A}_x$ and ${\bf A}_y$.
Of course, we can also define $\mathscr{M}$-tensors with respect to S-eigenvalues, which coincides with those tensors $\mathscr{A}$ whose unfolding matrices ${\bf A}_x$ and ${\bf A}_y$ are ${\bf M}$-matrices.
In this case, $\mathscr{A}$ is also M-positive semidefinite since ${\bf A}_x$ and ${\bf A}_y$ are positive semidefinite matrices.
However, the converse may still not hold, when $\mathscr{A}$ is an elasticity $\mathscr{M}$-tensor, as shown by the following example.

\begin{example}
  Consider the case $n=2$. Let $\mathscr{A} \in \mathbb{E}_{4,2}$ be defined by
  $$
  \begin{array}{lll}
    a_{1111}=13, & a_{1122}=2, & a_{2211}=2, \\
    a_{2222}=12, & a_{1112}=-2, & a_{1211}=-2, \\
    a_{1212}=-4, & a_{1222}=-1, & a_{2212}=-1.
  \end{array}
  $$
  By our calculations with Mathematica, $\mathscr{A}$ has six M-eigenvalues: $13.4163$, $12.1118$, $11.2036$, $6.1778$, $0.2442$, and $0.1964$.
  The minimal M-eigenvalue of $\mathscr{A}$ is $0.1964$, which is positive.
  Thus $\mathscr{A}$ is a nonsingular elasticity $\mathscr{M}$-tensor by Theorem \ref{thm_pd}.
  Nonetheless, the unfolding matrices of $\mathscr{A}$ are
  $$
  {\bf A}_x = {\bf A}_y =
  \begin{bmatrix}
    13 & -2 & -2 & -4 \\
    -2 & 2 & -4 & -1 \\
    -2 & -4 & 2 & -1 \\
    -4 & -1 & -1 & 12
  \end{bmatrix},
  $$
  with four eigenvalues: $-2.8331$, $6.0000$, $9.2221$, and $16.6110$.
  There is a negative eigenvalue, which implies that ${\bf A}_x$ and ${\bf A}_y$ are not positive semidefinite and thus not ${\bf M}$-matrices.
\end{example}

We now provide some equivalent definitions of nonsingular elasticity $\mathscr{M}$-tensors, which serve as verification conditions. Recall the definitions of the two $n$-by-$n$ matrices $\mathscr{A} {\bf x}^2$ and $\mathscr{A} {\bf y}^2$ in Section \ref{sec_se}.
The next theorem shows that these two matrices admit the same structures with the original elasticity tensor.

\begin{theorem}\label{thm_Axx}
  Let $\mathscr{A} \in \mathbb{E}_{4,n}$ be an elasticity $\mathscr{Z}$-tensor.
  Then $\mathscr{A}$ is a nonsingular elasticity $\mathscr{M}$-tensor if and only if $\mathscr{A} {\bf x}^2$ is a nonsingular ${\bf M}$-matrix for each ${\bf x} \geq 0$;
  $\mathscr{A}$ is an elasticity $\mathscr{M}$-tensor if and only if $\mathscr{A} {\bf x}^2$ is an ${\bf M}$-matrix for each ${\bf x} \geq 0$.
\end{theorem}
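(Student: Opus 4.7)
The plan is to bridge the matrix-level and tensor-level conditions through a common quadratic-form statement, so that the theorem collapses into Theorem \ref{thm_pd} combined with the classical Stieltjes characterization of symmetric $\mathbf{M}$-matrices.

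First I would check that, for every ${\bf x}\ge 0$, the matrix $\mathscr{A}{\bf x}^2$ is a symmetric $Z$-matrix: symmetry is immediate from $a_{ijkl}=a_{ijlk}$, and for $k\neq l$ the $(k,l)$-entry $\sum_{i,j} a_{ijkl}\, x_i x_j$ is a nonpositive combination because $\mathscr{A}$ is an elasticity $\mathscr{Z}$-tensor and $x_i x_j\ge 0$. The Stieltjes-matrix theorem then says that a symmetric $Z$-matrix is a nonsingular (resp.\ general) $\mathbf{M}$-matrix if and only if it is positive definite (resp.\ positive semidefinite). Consequently, the right-hand condition of the theorem is equivalent to
\[
\mathscr{A}{\bf x}^2 {\bf y}^2 \;=\; {\bf y}^\top (\mathscr{A}{\bf x}^2)\, {\bf y} \;>\; 0 \quad (\text{resp.}\ \ge 0)
\]
for every nonzero ${\bf x}\ge 0$ and every nonzero ${\bf y}\in\mathbb{R}^n$.

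The next step is to upgrade this ``nonnegative-orthant'' quadratic positivity to positivity over all of $\mathbb{R}^n$. Using any splitting $\mathscr{A} = s\mathscr{E} - \mathscr{B}$ with $\mathscr{B}\in\mathbb{E}_{4,n}$ nonnegative, I would combine the identity $\mathscr{E}{\bf x}^2{\bf y}^2=\|{\bf x}\|^2\|{\bf y}\|^2$ with the entrywise bound $\mathscr{B}{\bf x}^2{\bf y}^2\le \mathscr{B}|{\bf x}|^2|{\bf y}|^2$ (valid for any nonnegative $\mathscr{B}$) to conclude
\[
\mathscr{A}{\bf x}^2 {\bf y}^2 \;\ge\; \mathscr{A}|{\bf x}|^2 |{\bf y}|^2 \quad \text{for all } {\bf x},{\bf y}\in\mathbb{R}^n.
\]
Hence quadratic positivity on the nonnegative orthant is equivalent to $\mathscr{A}{\bf x}^2{\bf y}^2 > 0$ (resp.\ $\ge 0$) for all nonzero ${\bf x},{\bf y}\in\mathbb{R}^n$, i.e.\ to $\mathscr{A}$ being M-positive definite (resp.\ M-positive semidefinite). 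Applying Theorem \ref{thm_pd} finally converts M-positive (semi)definiteness of the elasticity $\mathscr{Z}$-tensor $\mathscr{A}$ back into its being a nonsingular (resp.\ general) elasticity $\mathscr{M}$-tensor, closing the chain of equivalences.

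The two technical ingredients are the Stieltjes characterization and the absolute-value comparison above, both of which are standard. The deeper content of the theorem has already been absorbed into Theorem \ref{thm_pd}, so the work here is essentially a translation between tensor- and matrix-level conditions, with the $\mathscr{Z}$-tensor sign pattern ensuring that no information is lost by restricting ${\bf x}$ to the nonnegative orthant. The only care I anticipate is the trivial boundary case ${\bf x}={\bf 0}$, which should be read as excluded in the ``nonsingular $\mathbf{M}$-matrix'' clause (since the zero matrix is not nonsingular) and as included in the ``$\mathbf{M}$-matrix'' clause.
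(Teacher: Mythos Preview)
Your proposal is correct and follows essentially the same approach as the paper: both directions hinge on the Stieltjes characterization (a symmetric $Z$-matrix is a nonsingular ${\bf M}$-matrix iff it is positive definite) together with Theorem~\ref{thm_pd}. The only cosmetic difference is in the converse direction, where the paper passes from positivity of $\mathscr{A}{\bf x}^2{\bf y}^2$ on the nonnegative orthant directly to $s>\rho_M(\mathscr{B})$ via Corollary~\ref{coro nn}, whereas you first use the absolute-value comparison $\mathscr{A}{\bf x}^2{\bf y}^2\ge \mathscr{A}|{\bf x}|^2|{\bf y}|^2$ to upgrade to full M-positive definiteness and then invoke Theorem~\ref{thm_pd}; the underlying inequality is the same one that drives Theorem~\ref{thm_nn} and Corollary~\ref{coro nn}.
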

\begin{proof}
  Suppose that $\mathscr{A}$ is a nonsingular elasticity $\mathscr{M}$-tensor.
  Then we know by \eqref{eq_Axx} that $\mathscr{A} {\bf x}^2 $ is positive definite for each ${\bf x} \in \mathbb{R}^n$ since $\mathscr{A}$ is M-positive definite.
  Another simple observation is that $\mathscr{A} {\bf x}^2 $ is a ${\bf Z}$-matrix for each ${\bf x} \geq 0$ when $\mathscr{A}$ is an elasticity $\mathscr{Z}$-tensor.
  Thus $\mathscr{A} {\bf x}^2 $ is a positive definite ${\bf Z}$-matrix for each ${\bf x} \geq 0$.
  From the equivalent definitions of nonsingular ${\bf M}$-matrices \cite{BermanPlemmons94}, it can be concluded that $\mathscr{A} {\bf x}^2 $ is a nonsingular ${\bf M}$-matrix for each ${\bf x} \geq 0$.

  Conversely, if $\mathscr{A} {\bf x}^2 $ is a nonsingular ${\bf M}$-matrix for each ${\bf x} \geq 0$, then $\mathscr{A} {\bf x}^2 $ is always positive definite.
  That is, $\mathscr{A} {\bf x}^2 {\bf y}^2 = {\bf y}^\top \mathscr{A} {\bf x}^2  {\bf y} > 0$ for each ${\bf x} \geq 0$ and ${\bf y} \in \mathbb{R}^n$.
  Write $\mathscr{A} = s \mathscr{E} - \mathscr{B}$, where $\mathscr{B}$ is nonnegative.
  Then $s > \mathscr{B} {\bf x}^2 {\bf y}^2$ for each ${\bf x},{\bf y} \geq 0$ satisfying ${\bf x}^\top {\bf x} = {\bf y}^\top {\bf y} = 1$.
  Hence, Corollary \ref{coro nn} tells that $s > \rho_M(\mathscr{B})$, i.e., $\mathscr{A}$ is a nonsingular elasticity $\mathscr{M}$-tensor.
\end{proof}

Similarly, we have a parallel result for $\mathscr{A}{\bf y}^2$.

\begin{theorem}\label{thm_Ayy}
  Let $\mathscr{A} \in \mathbb{E}_{4,n}$ be an elasticity $\mathscr{Z}$-tensor. Then $\mathscr{A}$ is a nonsingular elasticity $\mathscr{M}$-tensor if and only if $\mathscr{A}{\bf y}^2$ is a nonsingular ${\bf M}$-matrix for each ${\bf y} \geq 0$; $\mathscr{A}$ is an elasticity $\mathscr{M}$-tensor if and only if $\mathscr{A}{\bf y}^2$ is an ${\bf M}$-matrix for each ${\bf y} \geq 0$.
\end{theorem}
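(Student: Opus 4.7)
The plan is to imitate the proof of Theorem \ref{thm_Axx} verbatim, swapping the roles of $\mathbf{x}$ and $\mathbf{y}$. The key device from \eqref{eq_Axx} in that earlier proof was the identity $\mathscr{A}{\bf x}^2{\bf y}^2 = {\bf y}^\top(\mathscr{A}{\bf x}^2){\bf y}$; here I will use its companion $\mathscr{A}{\bf x}^2{\bf y}^2 = {\bf x}^\top(\mathscr{A}{\bf y}^2){\bf x}$, which is symmetric to it under $\mathbf{x}\leftrightarrow\mathbf{y}$, and everything else is purely formal bookkeeping.

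For the ``only if'' direction in the nonsingular case, I would assume $\mathscr{A}$ is a nonsingular elasticity $\mathscr{M}$-tensor, invoke Theorem \ref{thm_pd} to get that it is M-positive definite, and then read off from the companion identity that $\mathscr{A}{\bf y}^2$ is positive definite for every nonzero ${\bf y}$. To upgrade this to a nonsingular ${\bf M}$-matrix when ${\bf y}\geq 0$, I would next verify the ${\bf Z}$-matrix sign pattern: for $i\neq j$ the entry $(\mathscr{A}{\bf y}^2)_{ij} = \sum_{k,l=1}^n a_{ijkl} y_k y_l$ is a nonnegative combination of the non-positive off-diagonal entries $a_{ijkl}$ of the elasticity $\mathscr{Z}$-tensor, hence $(\mathscr{A}{\bf y}^2)_{ij} \leq 0$; a positive definite ${\bf Z}$-matrix is then a nonsingular ${\bf M}$-matrix by the standard list of equivalences in \cite{BermanPlemmons94}. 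For the ``if'' direction I would reverse this: positive definiteness of $\mathscr{A}{\bf y}^2$ for each ${\bf y}\geq 0$ gives ${\bf x}^\top(\mathscr{A}{\bf y}^2){\bf x} = \mathscr{A}{\bf x}^2{\bf y}^2 > 0$ for every nonzero ${\bf x}$ and each ${\bf y}\geq 0$; writing $\mathscr{A} = s\mathscr{E} - \mathscr{B}$ with $\mathscr{B}$ nonnegative in $\mathbb{E}_{4,n}$ and using $\mathscr{E}{\bf x}^2{\bf y}^2 = ({\bf x}^\top{\bf x})({\bf y}^\top{\bf y})$, this rearranges to $s > \mathscr{B}{\bf x}^2{\bf y}^2$ on unit-norm nonnegative vectors, and Corollary \ref{coro nn} then yields $s > \rho_M(\mathscr{B})$. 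The general-$\mathscr{M}$-tensor statement follows by replacing strict inequalities with weak ones throughout, exactly as in Theorem \ref{thm_pd}.

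There is no genuine obstacle beyond what Theorem \ref{thm_Axx} already overcame; the argument is its mirror image. The only point worth double-checking is that ``diagonal of $\mathscr{A}$'' means $i=j$ \emph{and} $k=l$ simultaneously, so that $i\neq j$ alone already forces every $a_{ijkl}$ to be off-diagonal and hence non-positive. This is precisely the convention fixed at the beginning of Section \ref{sec_elasM}, so the ${\bf Z}$-matrix step, and with it the whole proof, goes through without modification.
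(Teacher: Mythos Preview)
Your proposal is correct and is exactly the parallel argument the paper intends: the paper gives no separate proof of Theorem~\ref{thm_Ayy}, merely remarking ``Similarly, we have a parallel result for $\mathscr{A}{\bf y}^2$,'' and your mirror of the Theorem~\ref{thm_Axx} proof via the companion identity $\mathscr{A}{\bf x}^2{\bf y}^2 = {\bf x}^\top(\mathscr{A}{\bf y}^2){\bf x}$ and Corollary~\ref{coro nn} is precisely that omitted argument.
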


There is a well-known equivalent definition for nonsingular ${\bf M}$-matrices called semi-positivity.
That is, a ${\bf Z}$-matrix ${\bf A}$ is a nonsingular ${\bf M}$-matrix if and only if there exits a positive (or equivalently nonnegative) vector ${\bf x}$ such that ${\bf A} {\bf x}$ is also a positive vector.
Ding, Qi, and Wei \cite{DingQiWei13} proved that this also holds for nonsingular $\mathscr{M}$-tensors.
The semi-positivity is essential to verify whether a tensor is a nonsingular $\mathscr{M}$-tensor and is also important for solving the polynomial systems of equations with $\mathscr{M}$-tensors \cite{DingWei16}.
Combining the semi-positivity of nonsingular ${\bf M}$-matrices and Theorems \ref{thm_Axx} and \ref{thm_Ayy}, we have the following equivalent definitions for nonsingular elasticity $\mathscr{M}$-tensors immediately.

\begin{theorem}\label{thm_Axxy}
  Let $\mathscr{A} \in \mathbb{E}_{4,n}$ be an elasticity $\mathscr{Z}$-tensor. The following conditions are equivalent:
  \begin{enumerate}[{\rm (1)}]
    \item $\mathscr{A}$ is a nonsingular elasticity $\mathscr{M}$-tensor;
    \item For each ${\bf x} \geq 0$, there exists ${\bf y} > 0$ such that $\mathscr{A} {\bf x}^2 {\bf y} > 0$;
    \item For each ${\bf x} \geq 0$, there exists ${\bf y} \geq 0$ such that $\mathscr{A} {\bf x}^2 {\bf y} > 0$;
    \item For each ${\bf y} \geq 0$, there exists ${\bf x} > 0$ such that $\mathscr{A} {\bf x} {\bf y}^2 > 0$;
    \item For each ${\bf y} \geq 0$, there exists ${\bf x} \geq 0$ such that $\mathscr{A} {\bf x} {\bf y}^2 > 0$.
  \end{enumerate}
\end{theorem}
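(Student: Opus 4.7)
The plan is to reduce the theorem to the classical semi-positivity characterization of nonsingular ${\bf M}$-matrices, using Theorems \ref{thm_Axx} and \ref{thm_Ayy} as the bridge between the tensor setting and the matrix setting. Since the paper states that this result follows ``immediately'' from those two theorems together with the semi-positivity property, the main work is to verify that the translation between the fourth-order and second-order statements is clean.

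First, I would establish the chain (1) $\Leftrightarrow$ (2) $\Leftrightarrow$ (3). By Theorem \ref{thm_Axx}, condition (1) is equivalent to: $\mathscr{A}{\bf x}^2$ is a nonsingular ${\bf M}$-matrix for every ${\bf x} \geq 0$. I would then invoke the classical semi-positivity theorem from \cite{BermanPlemmons94}: a ${\bf Z}$-matrix ${\bf A}$ is a nonsingular ${\bf M}$-matrix if and only if there exists ${\bf y} > 0$ with ${\bf A}{\bf y} > 0$, and equivalently if and only if there exists ${\bf y} \geq 0$ with ${\bf A}{\bf y} > 0$. Applying this to ${\bf A} = \mathscr{A}{\bf x}^2$ and using the identity $\mathscr{A}{\bf x}^2{\bf y} = (\mathscr{A}{\bf x}^2){\bf y}$ recorded in \eqref{eq_Axx} immediately converts the matrix-level quantifiers into the tensor-level quantifiers of (2) and (3).

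Next, I would mirror the argument using Theorem \ref{thm_Ayy} in place of Theorem \ref{thm_Axx}, together with the companion identity $\mathscr{A}{\bf x}{\bf y}^2 = (\mathscr{A}{\bf y}^2){\bf x}$, to obtain (1) $\Leftrightarrow$ (4) $\Leftrightarrow$ (5). This closes the set of five equivalences.

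The only point that truly needs checking, and which I would flag as the sole potential ``obstacle,'' is that $\mathscr{A}{\bf x}^2$ (resp.\ $\mathscr{A}{\bf y}^2$) is actually a ${\bf Z}$-matrix whenever ${\bf x} \geq 0$ (resp.\ ${\bf y} \geq 0$), so that the semi-positivity characterization is applicable. This is short: for $k \neq l$ every coefficient $a_{ijkl}$ entering $(\mathscr{A}{\bf x}^2)_{kl} = \sum_{i,j} a_{ijkl} x_i x_j$ is off-diagonal in the elasticity sense and hence non-positive, while $x_i x_j \geq 0$. Beyond this bookkeeping observation, no genuine difficulty arises; the proof is essentially a direct transport of a matrix fact through the identities in \eqref{eq_Axx}.
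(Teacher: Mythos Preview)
Your proposal is correct and matches the paper's approach exactly: the paper states that Theorem \ref{thm_Axxy} follows ``immediately'' from the semi-positivity characterization of nonsingular ${\bf M}$-matrices combined with Theorems \ref{thm_Axx} and \ref{thm_Ayy}, and you have spelled out precisely that reduction (including the check that $\mathscr{A}{\bf x}^2$ is a ${\bf Z}$-matrix for ${\bf x}\geq 0$, which the paper already noted in the proof of Theorem \ref{thm_Axx}).
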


A matrix ${\bf A} \in \mathbb{R}^{n \times n}$ is called strictly diagonally dominant if
$$
|a_{ii}| > \sum_{j \neq i} |a_{ij}|, \quad i = 1,2,\dots,n.
$$
Condition (2) in Theorem \ref{thm_Axxy} states that for each nonnegative vector ${\bf x}$, there exists a positive vector ${\bf y}$ such that $\mathscr{A} {\bf x}^2  {\bf y} = \mathscr{A} {\bf x}^2 {\bf y} > 0$.
Denote a diagonal matrix ${\bf D}$ with $d_{ii} = y_i$ for $i=1,2,\dots,n$ and $\widetilde{\bf A} := (\mathscr{A} {\bf x}^2 ) {\bf D}$.
When $\mathscr{A}$ be an elasticity $\mathscr{Z}$-tensor, the matrix $\widetilde{\bf A}$ is also a ${\bf Z}$-matrix.
Thus we have
$$
|\widetilde{a}_{ii}| - \sum_{j \neq i} |\widetilde{a}_{ij}|
= \widetilde{a}_{ii} + \sum_{j \neq i} \widetilde{a}_{ij}
= (\mathscr{A} {\bf x}^2 {\bf y})_i > 0, \quad i = 1,2,\dots,n,
$$
which implies that $\widetilde{\bf A}$ is strictly diagonally dominant.
Applying the above discussion, we can prove the following corollary of Theorem \ref{thm_Axxy}.

\begin{corollary}\label{coro_dd}
  Let $\mathscr{A} \in \mathbb{E}_{4,n}$ be an elasticity $\mathscr{Z}$-tensor. The following conditions are equivalent:
  \begin{enumerate}[{\rm (1)}]
    \item $\mathscr{A}$ is a nonsingular elasticity $\mathscr{M}$-tensor;
    \item For each ${\bf x} \geq 0$, there exists a positive diagonal matrix ${\bf D}$ such that ${\bf D} (\mathscr{A} {\bf x}^2)  {\bf D}$ is strictly diagonally dominant;
    \item For each ${\bf y} \geq 0$, there exists a positive diagonal matrix ${\bf D}$ such that ${\bf D} (\mathscr{A}{\bf y}^2) {\bf D}$ is strictly diagonally dominant.
  \end{enumerate}
\end{corollary}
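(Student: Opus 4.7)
The plan is to derive Corollary~\ref{coro_dd} directly from Theorem~\ref{thm_Axxy}, by promoting the one-line algebraic identity already sketched in the paragraph preceding the corollary into a bidirectional matrix-level equivalence, and then substituting ${\bf M} = \mathscr{A}{\bf x}^2$ for (2) and ${\bf M} = \mathscr{A}{\bf y}^2$ for (3). I would first record that for an elasticity $\mathscr{Z}$-tensor $\mathscr{A}$ and any ${\bf x} \geq 0$, the matrix $\mathscr{A}{\bf x}^2$ is a ${\bf Z}$-matrix, since $(\mathscr{A}{\bf x}^2)_{kl} = \sum_{i,j}a_{ijkl}x_ix_j \leq 0$ when $k \neq l$ (every $a_{ijkl}$ with $k \neq l$ is off-diagonal, hence non-positive), and analogously for $\mathscr{A}{\bf y}^2$.

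The heart of the proof will be the following matrix lemma: for a ${\bf Z}$-matrix ${\bf M} \in \mathbb{R}^{n \times n}$ with positive diagonal and a positive vector ${\bf y}$ with ${\bf D} = \mathrm{diag}({\bf y})$, the componentwise inequality ${\bf M}{\bf y} > 0$ is equivalent to ${\bf D}{\bf M}{\bf D}$ being strictly diagonally dominant. Both directions are packaged in the single identity
\[
|(DMD)_{kk}| - \sum_{l \neq k}|(DMD)_{kl}|
= y_k^2 M_{kk} - y_k\sum_{l \neq k} y_l|M_{kl}|
= y_k\Bigl(M_{kk}y_k + \sum_{l \neq k} M_{kl}y_l\Bigr)
= y_k({\bf M}{\bf y})_k,
\]
where the first equality uses $|M_{kk}| = M_{kk}$, the second uses $|M_{kl}| = -M_{kl}$ for $l \neq k$, and the factor $y_k>0$ preserves the sign. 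With this lemma in hand, condition (1) is equivalent, via Theorem~\ref{thm_Axxy}(1)$\Leftrightarrow$(2), to the existence, for every nonzero ${\bf x} \geq 0$, of a positive ${\bf y}$ with $(\mathscr{A}{\bf x}^2){\bf y} > 0$, which the lemma then recasts as condition (2). The equivalence (1)$\Leftrightarrow$(3) is obtained by the same argument applied instead to Theorem~\ref{thm_Axxy}(1)$\Leftrightarrow$(4), together with the identity $\mathscr{A}{\bf x}{\bf y}^2 = (\mathscr{A}{\bf y}^2){\bf x}$ from \eqref{eq_Axx}.

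The main obstacle will be the implicit positive-diagonal hypothesis used in the reverse direction of the lemma: the absolute-value definition of strict diagonal dominance does not, on its own, force $M_{kk} > 0$. For (1)$\Rightarrow$(2) this costs nothing, because the inequality $(\mathscr{A}{\bf x}^2){\bf y} > 0$ with ${\bf y}>0$ and the ${\bf Z}$-structure already force the diagonal of $\mathscr{A}{\bf x}^2$ to be positive. For (2)$\Rightarrow$(1) I would appeal to the convention, tacitly used by the authors in the paragraph above the corollary when they replace $|\widetilde{a}_{ii}|$ by $\widetilde{a}_{ii}$, that in this ${\bf Z}$-/${\bf M}$-matrix setting ``strictly diagonally dominant'' is read in the positive-diagonal sense; with that convention the reverse direction of the lemma is immediate and the entire argument closes mechanically.
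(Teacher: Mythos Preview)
Your proposal is correct and follows essentially the same route as the paper: the paragraph immediately preceding the corollary already gives the identity linking $(\mathscr{A}{\bf x}^2){\bf y}>0$ to strict diagonal dominance of the scaled matrix, and the corollary is then deduced from Theorem~\ref{thm_Axxy}. The only cosmetic difference is that the paper phrases the computation with the one-sided scaling $(\mathscr{A}{\bf x}^2){\bf D}$ rather than your two-sided ${\bf D}(\mathscr{A}{\bf x}^2){\bf D}$, but left-multiplication by a positive diagonal matrix does not affect row-wise strict diagonal dominance, so the two are interchangeable; your version has the minor advantage of matching the corollary's statement verbatim and of making the reverse implication explicit, which the paper leaves to the reader.
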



\section{Conclusions}

We have established several sufficient conditions for the strong ellipticity (M-positive definiteness) of general elasticity tensors. Our first sufficient condition extends the coverage of S-PSD tensors, which states that $\mathscr{A}$ is M-PSD or M-PD if it can be modified into an S-PSD or S-PD tensor $\mathscr{B}$ respectively by preserving $b_{ijkl} = b_{jilk}$ and $b_{ijkl} + b_{jikl} = a_{ijkl} + a_{jikl}$.
To check whether a tensor satisfies this condition, we employ an alternating projection method called POCS and display its convergence.

Next, we consider the properties for nonnegative elasticity tensors. A Perron-Frobenius type theorem for M-spectral radii of a nonnegative elasticity tensor has been proposed in Section \ref{sec_nonneg}. Then we investigate a class of tensors satisfying the SE-condition, the elasticity $\mathscr{M}$-tensor.
Combining Theorems \ref{thm_mindiag}, \ref{thm_pd},  \ref{thm_Axx}-- \ref{thm_Axxy} and Corollary \ref{coro 5.4}, \ref{coro_dd}, we summarize the equivalent definitions for nonsingular elasticity $\mathscr{M}$-tensors given in this paper.
Let $\mathscr{A} \in \mathbb{E}_{4,n}$ be an elasticity $\mathscr{Z}$-tensor. The following conditions are equivalent:
  \begin{enumerate}[{\rm (C1)}]
    \item $\mathscr{A}$ is a nonsingular elasticity $\mathscr{M}$-tensor;
    \item $\mathscr{A}$ is M-positive definite, i.e., $\mathscr{A} {\bf x}^2 {\bf y}^2 > 0$ for all nonzero ${\bf x}, {\bf y} \in \mathbb{R}^n$;
    \item $\min \big\{ \mathscr{A} {\bf x}^2 {\bf y}^2:\, {\bf x},{\bf y} \in \mathbb{R}_{+}^n,\, {\bf x}^{\top}{\bf x}={\bf y}^{\top}{\bf y}=1 \big\}> 0$;
    \item All the M-eigenvalues of $\mathscr{A}$ are positive;
    \item $\alpha > \rho_M (\alpha \mathscr{E} - \mathscr{A})$, where $\alpha = \max \big\{ a_{iikk}:\, i,k=1,2,\dots,n \big\}$;
    \item For each ${\bf x} \geq 0$, $\mathscr{A} {\bf x}^2 $ is a nonsingular ${\bf M}$-matrix;
    \item For each ${\bf x} \geq 0$, there exists ${\bf y} > 0$ such that $\mathscr{A} {\bf x}^2 {\bf y} > 0$;
    \item For each ${\bf x} \geq 0$, there exists ${\bf y} \geq 0$ such that $\mathscr{A} {\bf x}^2 {\bf y} > 0$;
    \item For each ${\bf x} \geq 0$, there exists a positive diagonal matrix ${\bf D}$ such that ${\bf D} (\mathscr{A} {\bf x}^2 ) {\bf D}$ is strictly diagonally dominant;
    \item For each ${\bf y} \geq 0$, $\mathscr{A}{\bf y}^2$ is a nonsingular ${\bf M}$-matrix;
    \item For each ${\bf y} \geq 0$, there exists ${\bf x} > 0$ such that $\mathscr{A} {\bf x} {\bf y}^2 > 0$;
    \item For each ${\bf y} \geq 0$, there exists ${\bf x} \geq 0$ such that $\mathscr{A} {\bf x} {\bf y}^2 > 0$;
    \item For each ${\bf y} \geq 0$, there exists a positive diagonal matrix ${\bf D}$ such that ${\bf D} (\mathscr{A}{\bf y}^2 ){\bf D}$ is strictly diagonally dominant.
  \end{enumerate}


\end{document}